\begin{document}

\title{Sliding Mode Attitude Maneuver Control for Rigid Spacecraft without
Unwinding}
\author{Rui-Qi Dong, \IEEEmembership{Student Member, IEEE}, Ai-Guo Wu, %
\IEEEmembership{Member, IEEE} and Ying Zhang,
\thanks{%
This paragraph of the first footnote will contain the date on which you
submitted your brief for review. It will also contain support information,
including sponsor and financial support acknowledgment. For example, ``This
work was supported in part by the U.S. Department of Commerce under Grant
BS123456.'' } \thanks{%
The next few paragraphs should contain the authors' current affiliations,
including current address and e-mail. For example, F. A. Author is with the
National Institute of Standards and Technology, Boulder, CO 80305 USA
(e-mail: author@boulder.nist.gov). } \thanks{%
S. B. Author, Jr., was with Rice University, Houston, TX 77005 USA. He is
now with the Department of Physics, Colorado State University, Fort Collins,
CO 80523 USA (e-mail: author@lamar.colostate.edu).} \thanks{%
T. C. Author is with the Electrical Engineering Department, University of
Colorado, Boulder, CO 80309 USA, on leave from the National Research
Institute for Metals, Tsukuba, Japan (e-mail: author@nrim.go.jp).}}
\maketitle

\begin{abstract}
In this paper, attitude maneuver control without unwinding phenomenon is
investigated for rigid spacecraft. First, a novel switching function is
constructed by a hyperbolic sine function. It is shown that the spacecraft
system possesses the unwinding-free performance when the system states are
on the sliding surface. Based on the designed switching function, a sliding
mode controller is developed to ensure the robustness of the attitude
maneuver control system. Another essential feature of the presented attitude
control law is that a dynamic parameter is introduced to guarantee the
unwinding-free performance when the system states are outside the sliding
surface. The simulation results demonstrate that the unwinding phenomenon is
avoided during the attitude maneuver of a rigid spacecraft by adopting the
constructed switching function and the proposed attitude control scheme.

\end{abstract}

\begin{IEEEkeywords}
Modified Rodrigues Parameter, rigid spacecraft, sliding mode control, unwinding phenomenon.
\end{IEEEkeywords}

\section{Introduction}

Attitude maneuver control of rigid spacecraft has gained a great deal of
attention in the last decades due to the benefits attained through its wide
applications such as satellite communication, ocean surveillance, and
spacecraft pointing~\cite{hughes2012spacecraft}.
Many control strategies have been proposed for solving the attitude maneuver
control problem, such as optimal control~\cite{Sharma2004Optimal}, event
trigger control~\cite{amrr2019event}, linear parameter varying control~\cite%
{jin2018lpv}, model predictive control~\cite{bayat2018model},
backstepping control~\cite{ji2018vibration}, and so on. However, the
attitude controller design is still challenging
due to two aspects: the inherent nonlinearity of the spacecraft attitude
dynamics and the unwinding phenomenon during spacecraft attitude maneuver.

Sliding Mode Control (SMC)
has been widely applied to deal with the nonlinearity of the spacecraft
attitude dynamics due to its strong robustness to disturbance~\cite%
{chen1993sliding,tiwari2016attitude,wu2018adaptive,zou2011finite,xu2015study}%
. By using the integral sliding mode, a high-order sliding mode controller
was proposed in~\cite{tiwari2016attitude} to address the chattering issue of
SMC methods. In~\cite{wu2018adaptive}, an adaptive law was proposed to
estimate the upper bound of the unknown lumped disturbances, including
external disturbance, flexible vibration, and inertia uncertainty. In~\cite%
{zou2011finite}, a finite time controller was presented for attitude
synchronization. In order to resolve the singular problem of the traditional
terminal and faster terminal sliding-mode control designs, a nonsingular
finite-time control approach was developed in~\cite{xu2015study}.

In the above control schemes, the unit quaternion was adopted to describe
spacecraft attitude. However, the quaternion has four parameters, which can
result in an extra constraint because three parameters are enough to
describe the spacecraft attitude%
~\cite{chaturvedi2011rigid}. Thus, the Modified Rodrigues Parameter (MRP)
was adopted to represent the spacecraft attitude, and plenty of controllers
were proposed~\cite%
{du2011finite,tsiotras1998further,cong2013backstepping,crassidis1996sliding}%
. In~\cite{du2011finite}, two finite-time attitude control laws were
developed for single and multiple spacecraft, respectively.
An attitude stabilization control law without angular velocity measurements
was established for rigid spacecraft in~\cite{tsiotras1998further}. The
attitude maneuver control was investigated in~\cite{cong2013backstepping},
and a backstepping based adaptive sliding mode control strategy was
proposed. In~\cite{crassidis1996sliding}, a sliding mode control method was
presented to solve the attitude tracking problem of rigid spacecraft.

Note that the unwinding problem was neglected by the aforementioned
researches when the MRP was adopted to represent the spacecraft attitude. A
typical feature of the MRP is its non-uniqueness, that is, a specific
spacecraft orientation can be represented by two different MRP vectors.
These two MRP vectors correspond to two different rotation angles and
opposite rotation directions about the same Euler axis. The sum of these two
rotation angles is $2\pi$. The unwinding phenomenon is that the rotation
angle larger than $\pi$ is performed by an attitude maneuver controller,
which results in extra control effort. Nevertheless, to the best knowledge
of the author, there is no unwinding-free result about the attitude maneuver
control for the rigid spacecraft based on the MRP description.

Based on the above discussions, we aim to design an unwinding-free sliding
mode attitude maneuver control law for a rigid spacecraft based on MRP
representation. First of all, a switching function is designed using a
hyperbolic sine function. Rigorous proof about the unwinding-free
performance of the spacecraft system when the system states are on the
sliding surface is given. Secondly, a sliding mode control law is presented
to guarantee that all the system states arrive at the constructed sliding
surface. Furthermore, it is proven that the unwinding-free performance of
the proposed controller with a dynamic parameter is achieved before the
system states reach the sliding surface. Finally, the simulations are
performed to show the unwinding-free performance of the proposed controller.
Compared with the SMC control law in~\cite{crassidis1996sliding}, the
proposed control law in this paper possesses faster convergence rate, and
smaller control torque.

Throughout this paper, we use the italic-font notation for a scalar variable
(as $\rho$), the bold-font notation for a vector (as $\boldsymbol{\sigma}$),
and the capital-letter notation for a matrix (as $M$). The set of $n$%
-dimensional real vectors, and the set of $m$-by-$n$ real matrices, are
denoted by $\mathbb{R}^{n}$ and $\mathbb{R}^{m\times n}$, respectively. In
addition, $\boldsymbol{0}$ and $I_{3}$ respectively denote a $3$-dimensional
zero vector and a $3\times 3$ identity matrix. We use $\left\Vert\cdot\right%
\Vert$ to represent the $2$-norm of a vector, and $\otimes$ to represent the
MRP multiplication. Moreover, the following two hyperbolic functions are
used, $\cosh x=\frac{\boldsymbol{e}^{x}+\boldsymbol{e}^{-x}}{2}$ and $\sinh
x=\frac{\boldsymbol{e}^{x}-\boldsymbol{e}^{-x}}{2}$ for $x\in \mathbb{R}$.
Moreover, the following relations are used: $\frac{\mathrm{d}\left( \cosh
x\right) }{\mathrm{d}x}=\sinh x$, $\frac{\mathrm{d}\left( \sinh x\right) }{%
\mathrm{d}x}=\cosh x$, $\frac{\mathrm{d}\left( \tan x\right) }{\mathrm{d}x}=%
\frac{1}{x^2+1}$, 
and $\cos^{2} x=\frac{1}{\tan^{2} x+1}$.

\section{Mathematic model}
Before given the mathematical model, we first give some denotations. For any vector $\boldsymbol{x}=\left[ x_{1}\
x_{2}\ x_{3}\right] ^{\mathrm{T}}\in \mathbb{R}^{3}$, let
\begin{equation*}
\boldsymbol{x}^{\times }=\left[
\begin{array}{ccc}
0 & -x_{3} & x_{2} \\
x_{3} & 0 & -x_{1} \\
-x_{2} & x_{1} & 0%
\end{array}%
\right],
\end{equation*}
and
\begin{equation}
M\left( \boldsymbol{x}\right) =\frac{\left( 1-\left\Vert \boldsymbol{x}%
\right\Vert ^{2}\right) I_{3}+2\boldsymbol{x}^{\times }+2\boldsymbol{xx}^{%
\mathrm{T}}}{4}.  \label{M}
\end{equation}%

\subsection{Attitude kinematics and dynamics}

By using the Modified Rodrigues Parameter (briefly, MRP), the rigid
spacecraft attitude dynamics can be given as~\cite{amrr2019event}
\begin{equation}
\left\{
\begin{array}{l}
\dot{\boldsymbol{\sigma }}=M\left( \boldsymbol{\sigma }\right) \boldsymbol{%
\omega }, \\
J\boldsymbol{\dot{\omega}}=-\boldsymbol{\omega }^{\times }J\boldsymbol{%
\omega }+\boldsymbol{u}+\boldsymbol{d},%
\end{array}%
\right.  \label{dynamics}
\end{equation}%
where $\boldsymbol{\sigma }\in \mathbb{R}^{3}$ is the spacecraft attitude of
the body frame $\mathcal{F}_{\mathrm{b}}$ with respect to the inertia frame $%
\mathcal{F}_{\mathrm{I}}$, $\boldsymbol{\omega }\in \mathbb{R}^{3}$ is the
angular velocity expressed in $\mathcal{F}_{\mathrm{b}}$; $J\in \mathbb{R}%
^{3\times 3}$ is the inertia matrix, $\boldsymbol{u}\in \mathbb{R}^{3}$ is
the control input, and $\boldsymbol{d}\in \mathbb{R}^{3}$ is the
disturbance.
\subsection{Attitude error kinematics and dynamics}

Let $\boldsymbol{\sigma }_{\mathrm{d}}\in \mathbb{R}^{3}$ denotes the
spacecraft attitude of the desired frame $\mathcal{F}_{\mathrm{d}}$ with
respect to the inertia frame $\mathcal{F}_{\mathrm{I}}$, and $\boldsymbol{%
\omega }_{\mathrm{d}}\in \mathbb{R}^{3}$ denotes the attitude angular
velocity expressed in the desired frame $\mathcal{F}_{\mathrm{d}}$. In
addition, denote $\boldsymbol{\sigma }_{\mathrm{e}}\in \mathbb{R}^{3}$ as
the attitude error between the desired attitude $\boldsymbol{\sigma }_{%
\mathrm{d}}$ and the body attitude $\boldsymbol{\sigma }_{\mathrm{b}}$.
Then, the attitude error $\boldsymbol{\sigma }_{\mathrm{e}}$ can be given by%
\begin{align}
\boldsymbol{\sigma }_{\mathrm{e}}&=\boldsymbol{\sigma \otimes \sigma }_{%
\mathrm{d}}^{\ast },  \notag \\
&=\frac{\left( 1-\left\Vert \boldsymbol{\sigma }\right\Vert ^{2}\right)
\boldsymbol{\sigma }_{\mathrm{d}}+\left( 1-\left\Vert \boldsymbol{\sigma }_{%
\mathrm{d}}\right\Vert ^{2}\right) \boldsymbol{\sigma }+2\boldsymbol{\sigma }%
_{\mathrm{d}}^{\times }\boldsymbol{\sigma }}{1+\left\Vert \boldsymbol{\sigma
}_{\mathrm{d}}\right\Vert ^{2}\left\Vert \boldsymbol{\sigma }\right\Vert
^{2}+2\boldsymbol{\sigma }_{\mathrm{d}}^{\mathrm{T}}\boldsymbol{\sigma }},
\label{sigma_e}
\end{align}%
where $\boldsymbol{\sigma}_{\mathrm{d}}^{\ast }=-\boldsymbol{\sigma}_{%
\mathrm{d}}$. Denote $\boldsymbol{\omega }_{\mathrm{e}}\in \mathbb{R}^{3}$
as the attitude angular velocity error between $\boldsymbol{\omega }_{%
\mathrm{d}}$ and $\boldsymbol{\omega }_{\mathrm{b}}$. Then, we have
\begin{align}
\boldsymbol{\omega }_{\mathrm{e}}&=\boldsymbol{\omega }-R\left( \boldsymbol{%
\sigma }_{\mathrm{e}}\right) \boldsymbol{\omega }_{\mathrm{d}},
\label{omega_e}
\end{align}%
where $R\left( \boldsymbol{\sigma }_{\mathrm{e}}\right) $ is the rotation
matrix from the desired frame $\mathcal{F}_{\mathrm{d}}$ to the body frame $%
\mathcal{F}_{\mathrm{b}}$, and can be expressed as%
\begin{equation*}
R\left( \boldsymbol{\sigma }_{\mathrm{e}}\right) =I_{3}+\frac{8\boldsymbol{%
\sigma }_{\mathrm{e}}^{\times }\boldsymbol{\sigma }_{\mathrm{e}}^{\times
}-4\left( 1-\left\Vert \boldsymbol{\sigma }_{\mathrm{e}}\right\Vert
^{2}\right) \boldsymbol{\sigma }_{\mathrm{e}}^{\times }}{\left( 1+\left\Vert
\boldsymbol{\sigma }_{\mathrm{e}}\right\Vert ^{2}\right) ^{2}}.
\end{equation*}%
Thus, the attitude error kinematics can be obtained as~\cite{amrr2019event}
\begin{align}
\dot{\boldsymbol{\sigma }}_{\mathrm{e}}=M\left( \boldsymbol{\sigma }_{%
\mathrm{e}}\right) \boldsymbol{\omega }_{\mathrm{e}}.
\label{error kinematics}
\end{align}
In addition, the rotation matrix $R\left( \boldsymbol{\sigma }_{\mathrm{e}%
}\right)$ satisfies $\dot{R}\left( \boldsymbol{\sigma }_{\mathrm{e}}\right)=-%
\boldsymbol{\omega} _{\mathrm{e}}^{\times }R\left( \boldsymbol{\sigma }_{%
\mathrm{e}}\right)$. It follows from~(\ref{omega_e}) that%
\begin{equation}
\boldsymbol{\dot{\omega}}_{\mathrm{e}}=\boldsymbol{\dot{\omega}}-R\left(
\boldsymbol{\sigma }_{\mathrm{e}}\right) \boldsymbol{\dot{\omega}}_{\mathrm{d%
}}+\boldsymbol{\omega} _{\mathrm{e}}^{\times }R\left( \boldsymbol{\sigma }_{%
\mathrm{e}}\right)\boldsymbol{\omega }_{\mathrm{d}}.  \label{dotomega_e}
\end{equation}

For a rest-to-rest attitude maneuver control problem, the desired angular
velocity $\boldsymbol{\omega}_{\mathrm{d}}$ satisfies $\boldsymbol{\omega}_{%
\mathrm{d}}=\boldsymbol{0}$ and $\dot{\boldsymbol{\omega}}_{\mathrm{d}}=%
\boldsymbol{0}$. Thus, it can be obtained from~(\ref{dotomega_e}) that $\dot{%
\boldsymbol{\omega}}_{\mathrm{e}}=\dot{\boldsymbol{\omega}}$ holds. By
substituting this relation into the second equation of~({\ref{dynamics}}),
and using~(\ref{error kinematics}), the following rest-to-rest attitude
maneuver error dynamics for a rigid spacecraft based on MRP can be obtained~%
\cite{cong2013backstepping},
\begin{equation}
\left\{
\begin{array}{l}
\dot{\boldsymbol{\sigma }}_{\mathrm{e}}=M\left( \boldsymbol{\sigma }_{%
\mathrm{e}}\right) \boldsymbol{\omega }_{\mathrm{e},} \\
J\boldsymbol{\dot{\omega}}_{\mathrm{e}}=-\boldsymbol{\omega }_{\mathrm{e}%
}^{\times }J\boldsymbol{\omega }_{\mathrm{e}}+\boldsymbol{u}+\boldsymbol{d},%
\end{array}%
\right.  \label{systemmodel}
\end{equation}%
where the matrix $M\left( \boldsymbol{\sigma }_{\mathrm{e}}\right) $ in
terms of $\boldsymbol{\sigma }_{\mathrm{e}}$ can be obtained by replacing $%
\boldsymbol{x}$ with $\boldsymbol{\sigma }_{\mathrm{e}}$ in~(\ref{M}).

In addition, according to the Euler's principle rotation theorem~\cite%
{shuster1993survey}, the rest-to-rest attitude maneuver of a rigid
spacecraft can also be described as that the spacecraft performs a rotation
from the body frame $\mathcal{F}_{\mathrm{b}}$ to the desired frame $%
\mathcal{F}_{\mathrm{b}}$ about a certain Euler axis, which is a unit
vector. Suppose that the rotation angle and the Euler axis of this rotation
are denoted by $\theta \left(t\right)\in \mathbb{R}$ and $\boldsymbol{e}\in
\mathbb{R}^{3}$, respectively. Then, the rest-to-rest attitude maneuver
error $\boldsymbol{\sigma}_{\mathrm{e}}$ from $\mathcal{F}_{\mathrm{b}}$ to $%
\mathcal{F}_{\mathrm{d}}$ can be expressed as
\begin{equation}
\boldsymbol{\sigma }_{\mathrm{e}}=\boldsymbol{e}\tan \frac{%
\theta\left(t\right) }{4}.\   \label{sigma}
\end{equation}
According to~(\ref{sigma_e}), $\boldsymbol{\sigma }_{\mathrm{e}%
}\left(0\right)$ can be obtained as long as the initial attitude $%
\boldsymbol{\sigma}\left(0\right)$ of $\boldsymbol{\sigma}$ and the desired
attitude $\boldsymbol{\sigma}_{\mathrm{d}}$ are given. Thus, the following
relations can be obtained by~(\ref{sigma}),
\begin{equation}
\theta\left(t\right)=4\arctan \boldsymbol{e}^{\mathrm{T}}\boldsymbol{\sigma}%
_{\mathrm{e}},  \label{theta}
\end{equation}
and
\begin{equation}
\boldsymbol{e}=\frac{\boldsymbol{\sigma }_{\mathrm{e}}\left(0\right)}{%
\left\Vert \boldsymbol{\sigma }_{\mathrm{e}}\left(0\right)\right\Vert }.
\label{e}
\end{equation}

The initial value $\theta\left(0\right)$ of $\theta\left(t\right)$ can be
obtained by~(\ref{theta}). By designing an attitude maneuver controller, the
rigid spacecraft is driven to rotate about the fixed Euler axis $\boldsymbol{%
e}$ in~(\ref{e}), such that the rotation angle $\theta\left(t\right)$
converges from the initial value $\theta\left(0\right)$ to approach $0$ or $%
2\pi$.

\subsection{Unwinding phenomenon}

The phenomenon that spacecraft performs a rotation angle larger than $\pi$
to arrive at the desired attitude is called the "unwinding". For the
rest-to-rest attitude maneuver error dynamics of a rigid spacecraft
described by~(\ref{systemmodel}) with~(\ref{theta}), $\theta\left(t\right)=0$
and $\theta \left(t\right)=2\pi$ represent the same attitude. However, the
existing attitude maneuver control schemes are designed to ensure that the
rotation angle $\theta\left(t\right)$ converges from any initial value $%
\theta\left(0\right)$ to $0$. In this case, if $\theta\left(0\right)>\pi$,
then the spacecraft is driven to perform a rotation larger than $\pi $ about
the Euler axis $\boldsymbol{e}$, which results in the unwinding phenomenon.
However, the rigid spacecraft can reach the desired attitude by rotating an
angle $2\pi-\theta\left(0\right)$, which is smaller than $\pi$, about the
Euler axis $\boldsymbol{e}$ in the opposite direction.

\subsection{Control objective}

\label{controlgoal}
The control task in this work is to design an unwinding-free attitude
sliding mode controller for the attitude maneuver error dynamics~(\ref%
{systemmodel}) with~(\ref{theta}) of a rigid spacecraft, such that the
following relations hold,
\begin{align}
\lim_{t\rightarrow \infty }\theta\left(t\right)& =0\ \mathrm{or }%
\lim_{t\rightarrow \infty }\theta\left(t\right)=2\pi,\ \lim_{t\rightarrow
\infty }\boldsymbol{\omega }_{\mathrm{e}}=\boldsymbol{0}.  \label{aim}
\end{align}
Moreover, the unwinding phenomenon is also avoided.%

\section{Controller design methods}

In this section, we aim to develop an unwinding-free attitude controller for
the system~(\ref{systemmodel}), using sliding mode control theory. To
facilitate the controller development, we give some lemmas in subsection~\ref%
{lemmas}. To avoid the unwinding phenomenon when the system states are on
the sliding surface, we construct a novel switching function in subsection~%
\ref{switching function}. In order to avoid the unwinding phenomenon before
the system states reach the sliding surface, a sliding mode control law with
a dynamic parameter is developed in subsection~\ref{controller}.%

\subsection{Some lemmas}

\label{lemmas}

\begin{lemma}\label{thetadynamic}
Consider the rotation angle $\theta\left(t\right)$ given by~(\ref{theta}). The following relation holds,
\begin{equation}
\dot{\theta}\left(t\right)=\boldsymbol{e}^{\mathrm{T}}\boldsymbol{\omega }_{\mathrm{e}},
\label{dottheta}
\end{equation}
where the attitude angular velocity error $\boldsymbol{\omega }_{\mathrm{e}}$ is defined in~(\ref{omega_e}), and the Euler axis $\boldsymbol{e}$ can be obtained from~(\ref{e}).
\end{lemma}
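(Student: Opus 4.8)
The plan is to differentiate the defining relation~(\ref{theta}) directly in time and then substitute the error kinematics~(\ref{error kinematics}). Writing $s=\boldsymbol{e}^{\mathrm{T}}\boldsymbol{\sigma}_{\mathrm{e}}$, the definition $\theta=4\arctan s$ together with the chain rule and the listed derivative $\frac{\mathrm{d}(\arctan x)}{\mathrm{d}x}=\frac{1}{1+x^{2}}$ gives $\dot{\theta}=\frac{4}{1+s^{2}}\,\dot{s}=\frac{4}{1+s^{2}}\,\boldsymbol{e}^{\mathrm{T}}\dot{\boldsymbol{\sigma}}_{\mathrm{e}}$, where I have used that $\boldsymbol{e}$ is a constant unit vector (the fixed Euler axis from~(\ref{e})) so that it passes through the time derivative. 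Inserting $\dot{\boldsymbol{\sigma}}_{\mathrm{e}}=M(\boldsymbol{\sigma}_{\mathrm{e}})\boldsymbol{\omega}_{\mathrm{e}}$ from~(\ref{error kinematics}) reduces the entire claim to showing that the scalar prefactor $\frac{4}{1+s^{2}}$ cancels exactly against $\boldsymbol{e}^{\mathrm{T}}M(\boldsymbol{\sigma}_{\mathrm{e}})$, i.e.\ that $\boldsymbol{e}^{\mathrm{T}}M(\boldsymbol{\sigma}_{\mathrm{e}})=\frac{1+s^{2}}{4}\,\boldsymbol{e}^{\mathrm{T}}$.

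The second step is to evaluate $\boldsymbol{e}^{\mathrm{T}}M(\boldsymbol{\sigma}_{\mathrm{e}})$ by exploiting the structure of $\boldsymbol{\sigma}_{\mathrm{e}}$. The key observation is that for a rest-to-rest maneuver about the fixed axis $\boldsymbol{e}$, relation~(\ref{sigma}) gives $\boldsymbol{\sigma}_{\mathrm{e}}=\boldsymbol{e}\tan(\theta/4)=s\,\boldsymbol{e}$, so $\boldsymbol{\sigma}_{\mathrm{e}}$ is collinear with $\boldsymbol{e}$. Substituting this into the three terms of $M(\boldsymbol{\sigma}_{\mathrm{e}})$ from~(\ref{M}): the identity term contributes $(1-\|\boldsymbol{\sigma}_{\mathrm{e}}\|^{2})\boldsymbol{e}^{\mathrm{T}}=(1-s^{2})\boldsymbol{e}^{\mathrm{T}}$ since $\|\boldsymbol{\sigma}_{\mathrm{e}}\|=|s|$; the skew-symmetric term vanishes, because $\boldsymbol{e}^{\mathrm{T}}\boldsymbol{\sigma}_{\mathrm{e}}^{\times}=s\,\boldsymbol{e}^{\mathrm{T}}\boldsymbol{e}^{\times}=\boldsymbol{0}^{\mathrm{T}}$, using that $\boldsymbol{e}^{\times}$ is skew-symmetric so $\boldsymbol{e}^{\mathrm{T}}\boldsymbol{e}^{\times}=-(\boldsymbol{e}^{\times}\boldsymbol{e})^{\mathrm{T}}=-(\boldsymbol{e}\times\boldsymbol{e})^{\mathrm{T}}=\boldsymbol{0}^{\mathrm{T}}$; and the outer-product term contributes $2\boldsymbol{e}^{\mathrm{T}}\boldsymbol{\sigma}_{\mathrm{e}}\boldsymbol{\sigma}_{\mathrm{e}}^{\mathrm{T}}=2s^{2}\boldsymbol{e}^{\mathrm{T}}$. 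Summing and dividing by $4$ yields exactly $\frac{1+s^{2}}{4}\,\boldsymbol{e}^{\mathrm{T}}$, whence the prefactor cancellation delivers $\dot{\theta}=\boldsymbol{e}^{\mathrm{T}}\boldsymbol{\omega}_{\mathrm{e}}$.

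I expect the only delicate point to be the vanishing of the skew-symmetric contribution combined with the collinearity $\boldsymbol{\sigma}_{\mathrm{e}}=s\,\boldsymbol{e}$; everything else is a short algebraic cancellation. It is worth emphasizing that the argument leans essentially on $\boldsymbol{e}$ being constant in time, which holds precisely because a rest-to-rest maneuver is a pure rotation about a single fixed Euler axis, as recorded in~(\ref{sigma}) and~(\ref{e}). Since that collinearity is built into~(\ref{sigma}) directly, no separate invariance argument for the direction of $\boldsymbol{\sigma}_{\mathrm{e}}$ is required.
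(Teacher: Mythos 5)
Your proposal is correct and follows essentially the same route as the paper's proof: differentiate $\theta=4\arctan(\boldsymbol{e}^{\mathrm{T}}\boldsymbol{\sigma}_{\mathrm{e}})$, substitute the kinematics $\dot{\boldsymbol{\sigma}}_{\mathrm{e}}=M(\boldsymbol{\sigma}_{\mathrm{e}})\boldsymbol{\omega}_{\mathrm{e}}$, and use the collinearity $\boldsymbol{\sigma}_{\mathrm{e}}=\boldsymbol{e}\tan\frac{\theta}{4}$ to reduce $\boldsymbol{e}^{\mathrm{T}}M(\boldsymbol{\sigma}_{\mathrm{e}})$ to $\frac{1+\tan^{2}\frac{\theta}{4}}{4}\boldsymbol{e}^{\mathrm{T}}$, which cancels the prefactor. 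The only difference is cosmetic: you make explicit the vanishing of the skew term via $\boldsymbol{e}^{\mathrm{T}}\boldsymbol{e}^{\times}=\boldsymbol{0}^{\mathrm{T}}$ and the constancy of $\boldsymbol{e}$, both of which the paper uses implicitly.
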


\begin{proof}
Taking the derivative of both sides of~(\ref{theta}), and using the first
relation of~(\ref{systemmodel}), yields
\begin{align}
\dot{\theta}\left( t\right) & =\frac{4\boldsymbol{e}^{\mathrm{T}}\dot{%
\boldsymbol{\sigma }}_{\mathrm{e}}}{1+\left( \boldsymbol{e}^{\mathrm{T}}%
\boldsymbol{\sigma }_{\mathrm{e}}\right) ^{2}}  \notag \\
& =\frac{4\boldsymbol{e}^{\mathrm{T}}M\left( \boldsymbol{\sigma }_{\mathrm{e}%
}\right) \boldsymbol{\omega }_{\mathrm{e}}}{1+\left( \boldsymbol{e}^{\mathrm{%
T}}\boldsymbol{\sigma }_{\mathrm{e}}\right) ^{2}}.  \label{thetainter}
\end{align}%
In addition, by replacing $\boldsymbol{x}$ with $\boldsymbol{\sigma }_{%
\mathrm{e}}$ in~(\ref{M}), and using~(\ref{sigma}), we have
\begin{align}
&\boldsymbol{e}^{\mathrm{T}}M\left( \boldsymbol{\sigma }_{\mathrm{e}}\right)
\notag \\
&=\!\frac{\left( 1-\left\Vert \boldsymbol{\sigma }_{\mathrm{e}}\right\Vert
^{2}\right) \boldsymbol{e}^{\mathrm{T}}+2\boldsymbol{e}^{\mathrm{T}}%
\boldsymbol{e}^{\times }\tan \frac{\theta \left( t\right) }{4}+2\boldsymbol{e%
}^{\mathrm{T}}\tan ^{2}\frac{\theta \left( t\right) }{4}}{4}  \notag \\
& =\!\frac{\left( 1-\tan ^{2}\frac{\theta \left( t\right) }{4}\right)
\boldsymbol{e}^{\mathrm{T}}+2\boldsymbol{e}^{\mathrm{T}}\tan ^{2}\frac{%
\theta \left( t\right) }{4}}{4}  \notag \\
&=\frac{\left( 1+\tan ^{2}\frac{\theta \left( t\right) }{4}\right)
\boldsymbol{e}^{\mathrm{T}}}{4}.  \label{M2}
\end{align}%
It follows from~(\ref{sigma}) and~(\ref{thetainter}) that
\begin{align*}
\dot{\theta}\left( t\right) & =\frac{4}{1+\boldsymbol{\sigma }_{\mathrm{e}}^{%
\mathrm{T}}\boldsymbol{\sigma }_{\mathrm{e}}}\frac{1+\left\Vert \boldsymbol{%
\sigma }_{\mathrm{e}}\right\Vert ^{2}}{4}\boldsymbol{e}^{\mathrm{T}}%
\boldsymbol{\omega }_{\mathrm{e}} \\
& =\boldsymbol{e}^{\mathrm{T}}\boldsymbol{\omega }_{\mathrm{e}}.
\end{align*}%
Thus, the proof is completed.
\end{proof}

\begin{lemma}
\label{finitetime} Suppose $V(x)$ is a $C^{1}$ smooth positive-definite
function (defined on $U\subset \mathbb{R}^{n}$) and $\dot{V}(x)+\lambda V^{\alpha
}(x) $ is a negative semi-definite function on $U\subset \mathbb{R}^{n}$ for $\alpha
\in (0,1)$ and $\lambda \in \mathbb{R}^{+}$, then there exists an area $U_{0}\subset
\mathbb{R}^{n} $ such that any $V(x)$ which starts from $U_{0}\subset \mathbb{R}^{n}$ can
reach $V(x)\equiv 0$ in finite time. Moreover, if $T_{\mathrm{s}}$ is the
time needed to reach $V(x)\equiv 0$, then
\begin{equation*}
T_{\mathrm{s}}\leq \frac{V^{1-\alpha }(x_{0})}{\lambda \left( 1-\alpha
\right) },
\end{equation*}%
where $V(x_{0})$ is the initial value of $V(x)$.
\end{lemma}

\subsection{Switching function}

\label{switching function}

For the rest-to-rest attitude maneuver error dynamics~(\ref{systemmodel})
with~(\ref{theta}) of a rigid spacecraft, the switching function is designed
as
\begin{equation}
\boldsymbol{s}=\boldsymbol{\omega }_{\mathrm{e}}-\alpha \rho \left(
\boldsymbol{\sigma }_{\mathrm{e}}\right) \boldsymbol{\sigma }_{\mathrm{e}},
\label{s}
\end{equation}%
where $\alpha $ is a positive number, and%
\begin{equation}
\rho \left( \boldsymbol{\sigma }_{\mathrm{e}}\right) =\frac{\sinh g\left(
\boldsymbol{\sigma }_{\mathrm{e}}\right) }{1+\boldsymbol{\sigma }_{\mathrm{e}%
}^{\mathrm{T}}\boldsymbol{\sigma }_{\mathrm{e}}},  \label{rou}
\end{equation}%
with%
\begin{equation}
g\left( \boldsymbol{\sigma }_{\mathrm{e}}\right) =\arctan \boldsymbol{e}^{%
\mathrm{T}}\boldsymbol{\sigma }_{\mathrm{e}}-\frac{\pi }{4}.  \label{g}
\end{equation}

Next, a theorem is given to demonstrate that the control goal in~(\ref{aim})
can be achieved when $\boldsymbol{\omega }_{\mathrm{e}}$ and $\boldsymbol{%
\sigma }_{\mathrm{e}}$ are restricted to the sliding surface $\boldsymbol{s}=%
\boldsymbol{0}.$ Moreover, it is proven that the unwinding phenomenon is
conquered on the sliding surface.

Before given the theorem, we should give some properties of the functions $%
\cosh x$ and $\sinh x$ for $x\in \left[-\pi,\pi\right]$. The minimum value
of the function $\cosh x $ can be obtained when $x=0$, and the maximum value
of the function $\cosh x $ can be obtained when $x=-\pi$ or $x=\pi$. In
addition, for $x < 0 $, $\sinh x< 0$ holds, and for $x \geq0 $, $\sinh x\geq
0 $ holds.

\begin{theorem}
\label{systemstatesconvergence} Consider the rest-to-rest attitude maneuver error
dynamics~(\ref{systemmodel}) with~(\ref{theta}) for a rigid spacecraft. When the attitude errors $\boldsymbol{\sigma}_{%
\mathrm{e}}$ and $\boldsymbol{\omega}_{\mathrm{e}}$ are restricted to the sliding surface $%
\boldsymbol{s}=\boldsymbol{0}$, the following conclusions are obtained.

(i) The unwinding phenomenon is avoided.

(ii) The control goal in~(\ref{aim}) is attained.
\end{theorem}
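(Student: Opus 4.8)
The plan is to exploit the structure of the sliding surface $\boldsymbol{s}=\boldsymbol{0}$ to convert the dynamics of $\theta(t)$ into a scalar autonomous ODE, show that this ODE drives $\theta$ toward the correct endpoint without overshooting past $\pi$, and then confirm that both parts of~\eqref{aim} follow. On the surface, $\boldsymbol{\omega}_{\mathrm{e}}=\alpha\rho(\boldsymbol{\sigma}_{\mathrm{e}})\boldsymbol{\sigma}_{\mathrm{e}}$. Substituting this into Lemma~\ref{thetadynamic} and using the parametrization $\boldsymbol{\sigma}_{\mathrm{e}}=\boldsymbol{e}\tan\tfrac{\theta}{4}$ from~\eqref{sigma} together with $\boldsymbol{e}^{\mathrm{T}}\boldsymbol{e}=1$ gives
\begin{equation*}
\dot{\theta}=\boldsymbol{e}^{\mathrm{T}}\boldsymbol{\omega}_{\mathrm{e}}=\alpha\rho(\boldsymbol{\sigma}_{\mathrm{e}})\,\boldsymbol{e}^{\mathrm{T}}\boldsymbol{\sigma}_{\mathrm{e}}=\alpha\,\frac{\sinh g(\boldsymbol{\sigma}_{\mathrm{e}})}{1+\tan^{2}\tfrac{\theta}{4}}\tan\tfrac{\theta}{4},
\end{equation*}
where by~\eqref{g} and~\eqref{theta} the argument simplifies to $g=\arctan\bigl(\boldsymbol{e}^{\mathrm{T}}\boldsymbol{\sigma}_{\mathrm{e}}\bigr)-\tfrac{\pi}{4}=\tfrac{\theta}{4}-\tfrac{\pi}{4}$. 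So the whole error evolution collapses to a one-dimensional autonomous equation $\dot{\theta}=f(\theta)$ on the relevant interval $\theta\in(-2\pi,2\pi)$ (equivalently I would restrict to $[0,2\pi)$ using the physical range of $\theta(0)$).

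\textbf{Sign analysis for part (i).} The decisive observation is the sign of $\dot\theta$. Using $\cos^{2}\tfrac{\theta}{4}=\tfrac{1}{1+\tan^{2}(\theta/4)}$, I would rewrite the right-hand side as $\dot\theta=\alpha\,\cos^{2}\tfrac{\theta}{4}\,\tan\tfrac{\theta}{4}\,\sinh\!\bigl(\tfrac{\theta}{4}-\tfrac{\pi}{4}\bigr)$, which is $\tfrac{\alpha}{2}\sin\tfrac{\theta}{2}\cdot\sinh\!\bigl(\tfrac{\theta}{4}-\tfrac{\pi}{4}\bigr)$ after the double-angle identity. For $\theta\in(0,2\pi)$ the factor $\sin\tfrac{\theta}{2}$ is positive; the factor $\sinh(\tfrac{\theta}{4}-\tfrac{\pi}{4})$ is negative exactly when $\theta<\pi$ and positive when $\theta>\pi$, using the sign property of $\sinh$ stated just before the theorem (since the argument is negative iff $\theta<\pi$). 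Hence $\dot\theta<0$ for $\theta\in(0,\pi)$ and $\dot\theta>0$ for $\theta\in(\pi,2\pi)$: the rotation angle always moves toward its nearest terminal value, $0$ if it starts below $\pi$ and $2\pi$ if it starts above $\pi$. It therefore never crosses $\pi$, so the traversed angle is always less than $\pi$, which is precisely the unwinding-free property. I expect the main technical care here to be the equilibrium at $\theta=\pi$ (where $\dot\theta=0$): I would note this is an unstable/repelling equilibrium and argue it is non-attracting for generic initial data, or simply exclude the measure-zero case $\theta(0)=\pi$ as is standard.

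\textbf{Convergence for part (ii).} Because $f(\theta)$ has no equilibria strictly between $0$ and $\pi$ (and similarly none between $\pi$ and $2\pi$), the scalar trajectory is monotone and bounded, hence converges; the only limit compatible with $\dot\theta=0$ in each subinterval is the endpoint $0$ (resp.\ $2\pi$). To make this rigorous and obtain the stated limits I would build a Lyapunov-type argument, e.g.\ take $V=\tfrac12\bigl(\boldsymbol{e}^{\mathrm{T}}\boldsymbol{\sigma}_{\mathrm{e}}-\tan\tfrac{\pi}{4}\bigr)^{2}$ or work directly with $|g(\boldsymbol{\sigma}_{\mathrm{e}})|$ and show it is strictly decreasing off the equilibria, possibly invoking Lemma~\ref{finitetime} for a finite-time rate if the $V^{\alpha}$ differential inequality can be established. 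Finally, $\lim\theta\in\{0,2\pi\}$ forces $\boldsymbol{\sigma}_{\mathrm{e}}\to\boldsymbol{0}$ by~\eqref{sigma} (both endpoints give $\tan\tfrac{\theta}{4}\to 0$ or $\tan\tfrac{\pi}{2}$—care is needed at $2\pi$, where $\boldsymbol{\sigma}_{\mathrm{e}}$ blows up but represents the same orientation, so I would phrase the conclusion in terms of $\theta$ rather than $\boldsymbol{\sigma}_{\mathrm{e}}$); and $\boldsymbol{\omega}_{\mathrm{e}}=\alpha\rho(\boldsymbol{\sigma}_{\mathrm{e}})\boldsymbol{\sigma}_{\mathrm{e}}\to\boldsymbol{0}$ follows on the surface, giving~\eqref{aim}.

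\textbf{The hard part} will be handling the $\theta=2\pi$ terminal case cleanly, since there $\boldsymbol{\sigma}_{\mathrm{e}}$ itself diverges even though the physical attitude is well-defined; I would therefore carry the entire argument in the scalar variable $\theta$ (or in $\arctan\boldsymbol{e}^{\mathrm{T}}\boldsymbol{\sigma}_{\mathrm{e}}$, which stays bounded), and only translate back to $\boldsymbol{\omega}_{\mathrm{e}}$ at the end.
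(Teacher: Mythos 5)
Your proposal is correct, and for part (i) it is essentially the paper's own proof: the paper likewise reduces the sliding-surface dynamics to the scalar equation $\dot{\theta}=\alpha\sinh\left(\frac{\theta}{4}-\frac{\pi}{4}\right)\cos^{2}\frac{\theta}{4}\tan\frac{\theta}{4}$ and reads off the signs ($\dot{\theta}<0$ on $(0,\pi)$, $\dot{\theta}>0$ on $(\pi,2\pi)$). For part (ii) you take a genuinely different route: you invoke the elementary fact that a bounded monotone solution of a scalar autonomous ODE converges to a zero of the right-hand side, and since the only zeros in $[0,2\pi]$ are $0,\pi,2\pi$ and trajectories move away from $\pi$, the limit is the correct endpoint. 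The paper instead constructs the Lyapunov function $V_{1}=\kappa-\cosh g\left(\boldsymbol{\sigma}_{\mathrm{e}}\right)$ with $\kappa=\max\left(\cosh g\right)$, computes $\dot{V}_{1}=-\frac{\alpha}{4}\cos^{2}\frac{\theta}{4}\sinh^{2}\left(\frac{\theta}{4}-\frac{\pi}{4}\right)\tan\frac{\theta}{4}\leq 0$, and identifies $\dot{V}_{1}=0$ only at $\theta\in\{0,\pi,2\pi\}$ --- an invariance-style argument. Your route is more elementary and equally rigorous in this one-dimensional setting; the paper's buys a standard Lyapunov certificate of the kind expected in this literature. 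Two caveats on your side remarks: your candidate Lyapunov functions are oriented the wrong way --- both $\frac{1}{2}\left(\boldsymbol{e}^{\mathrm{T}}\boldsymbol{\sigma}_{\mathrm{e}}-1\right)^{2}$ and $\left\vert g\right\vert$ vanish at the repelling equilibrium $\theta=\pi$ and \emph{increase} along trajectories (they measure distance to $\pi$, not to the target set $\{0,2\pi\}$), so ``strictly decreasing'' would fail; had you relied on them the argument would break, but your monotonicity argument makes them unnecessary. Likewise, Lemma~\ref{finitetime} is not needed, since only asymptotic convergence is claimed on the surface. On the other hand, your explicit treatment of the equilibrium at $\theta=\pi$ and of the $\boldsymbol{\sigma}_{\mathrm{e}}$ singularity at $\theta=2\pi$ (carrying the argument in $\theta$ and noting that $\rho\left(\boldsymbol{\sigma}_{\mathrm{e}}\right)\boldsymbol{\sigma}_{\mathrm{e}}$, hence $\boldsymbol{\omega}_{\mathrm{e}}$, stays bounded and tends to $\boldsymbol{0}$) is more careful than the paper, which passes over both points silently.
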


\begin{proof}
Suppose that when $t=t_{s0}$ the system states reach the sliding surface $%
\boldsymbol{s}=\boldsymbol{0}$. Then, it can be obtained from~(\ref{s}) that%
\begin{equation}
\boldsymbol{\omega }_{\mathrm{e}}=\alpha \rho \left( \boldsymbol{\sigma }_{%
\mathrm{e}}\right) \boldsymbol{\sigma }_{\mathrm{e}}.
\label{sliding surface}
\end{equation}

To prove the conclusion (i), we need to prove that the following relations
hold,
\begin{equation}
\lim_{t\rightarrow \infty }\theta \left( t\right) =\left\{
\begin{array}{l}
0,\ \mathrm{if}\ \theta \left( t_{s0}\right) \in \left( 0,\pi \right) , \\
2\pi ,\ \mathrm{if}\ \theta \left( t_{s0}\right) \in \left( \pi ,2\pi
\right) .%
\end{array}%
\right.   \label{unwinding}
\end{equation}%
It can be obtained from (\ref{dottheta}) and (\ref{sliding surface}) that
\begin{equation}
\dot{\theta}\left( t\right) =\alpha \boldsymbol{e}^{\mathrm{T}}\rho \left(
\boldsymbol{\sigma }_{\mathrm{e}}\right) \boldsymbol{\sigma }_{\mathrm{e}}.
\end{equation}%

In the following, we rewrite $\dot{\theta}\left( t\right) $ in terms of the
rotation angle $\theta \left( t\right) $ and the Euler axis $\boldsymbol{e}$%
. First, using~(\ref{sigma}) and~(\ref{g}), $g\left(
\boldsymbol{\sigma }_{\mathrm{e}}\right) $ is rewritten as
\begin{align}
g\left( \theta \left( t\right) \right) & =\arctan \boldsymbol{e}^{\mathrm{T}}%
\boldsymbol{e}\tan \frac{\theta \left( t\right) }{4}-\frac{\pi }{4}  \notag
\\
& =\frac{\theta \left( t\right) }{4}-\frac{\pi }{4}.  \label{gtheta}
\end{align}%
Applying~(\ref{sigma}) and~(\ref{gtheta}) to~(\ref{rou}), gives
\begin{align}
\rho \left( \theta \left( t\right) \right) & =\frac{g\left( \theta \left(
t\right) \right) }{1+\boldsymbol{e}^{\mathrm{T}}\boldsymbol{e}\tan ^{2}\frac{%
\theta \left( t\right) }{4}}  \notag \\
& =\frac{\sinh \left( \frac{\theta \left( t\right) }{4}-\frac{\pi }{4}%
\right) }{1+\tan ^{2}\frac{\theta \left( t\right) }{4}}  \notag \\
& =\sinh \left( \frac{\theta \left( t\right) }{4}-\frac{\pi }{4}\right) \cos
^{2}\frac{\theta \left( t\right) }{4}.  \label{routheta}
\end{align}%
It follows from~(\ref{sigma}) and~(\ref{sliding surface}) that $\boldsymbol{%
\omega }_{\mathrm{e}}$ can be rewritten as
\begin{equation}
\boldsymbol{\omega }_{\mathrm{e}}=\alpha \boldsymbol{e}\sinh \left( \frac{%
\theta \left( t\right) }{4}-\frac{\pi }{4}\right) \cos ^{2}\frac{\theta
\left( t\right) }{4}\tan \frac{\theta \left( t\right) }{4}.
\label{omegatheta}
\end{equation}
Following~(\ref{sigma}),~(\ref{dottheta}) and~(\ref{omegatheta}), one can obtain
\begin{align}
\dot{\theta}\left( t\right)
&=\alpha \sinh \left( \frac{\theta \left( t\right) }{4}-\frac{\pi }{4}%
\right) \cos ^{2}\frac{\theta \left( t\right) }{4}\tan \frac{\theta \left(
t\right) }{4}.\label{dottheta1}
\end{align}%

In addition, there hold $\sinh \left( \frac{\theta \left( t\right) }{4}-%
\frac{\pi }{4}\right) <0$ for $\theta \left( t\right) \in \left( 0,\pi
\right) $, and $\sinh \left( \frac{\theta \left( t\right) }{4}-\frac{\pi }{4}%
\right) >0$ for $\theta \left( t\right) \in \left( \pi ,2\pi \right) $. As $%
\tan \frac{\theta \left( t\right) }{4}\geq 0$ for $\theta \left( t\right) \in \left( 0 ,2\pi \right) $, it can be derived from~(\ref%
{dottheta1}) that $\dot{\theta}\left( t\right) <0$ for $\theta \left(
t_{s0}\right) \in \left( 0,\pi \right) $, $\dot{\theta}\left( t\right) >0$
for $\theta \left( t_{s0}\right) \in \left( \pi ,2\pi \right) $, and $\dot{%
\theta}\left( t\right) =0$ for $\theta\left(t\right) =0$ or $\theta\left(t\right) =2\pi $. Thus, the
relations in~(\ref{unwinding}) is obtained. This implies that the unwinding
phenomenon is conquered when the system states are sliding on the sliding
surface $\boldsymbol{s}=\boldsymbol{0}$.

Thus, (i) is proven.

Next, the fact that the control goal in~(\ref{aim}) is achieved on the sliding
surface $\boldsymbol{s}=\boldsymbol{0}$ is proven. For this end, we chose the following Lyapunov function,
\begin{equation}
V_{1}\left( t\right) =\kappa -\cosh g\left( \boldsymbol{\sigma }_{\mathrm{e}%
}\right) ,  \label{V1}
\end{equation}%
where $\kappa =\max \left( \cosh g\left( \boldsymbol{\sigma }_{\mathrm{e}%
}\right) \right) $. Substituting~(\ref{gtheta}) into~(\ref{V1}), yields
\begin{equation}
V_{1}\left( t\right) =\kappa -\cosh g\left( \theta \left( t\right) \right) ,
\label{V11}
\end{equation}%
where
\begin{equation*}
\kappa \!=\!\cosh \left( g\left( \theta \left( t\right) \right) \right)
|_{\theta =0}\!=\!\cosh \left( g\left( \theta \left( t\right) \right)
\right) |_{\theta =2\pi }.
\end{equation*}

The time derivative of $V_{1}\left( t\right) $ in~(\ref{V11}) along time is
\begin{equation*}
\dot{V}_{1}\left( t\right) =-\frac{\mathrm{d}g\left( \theta \left( t\right)
\right) }{\mathrm{d}t}\sinh g\left( \theta \left( t\right) \right) .
\end{equation*}%
It follows from~(\ref{gtheta}) and Lemma~\ref{thetadynamic} that
\begin{align}
\dot{V}_{1}\left( t\right) & =-\frac{\dot{\theta}\left( t\right) }{4}\sinh
\left( \frac{\theta \left( t\right) }{4}-\frac{\pi }{4}\right) .
\label{dotv11}
\end{align}%
By substituting~(\ref{omegatheta}) into~(\ref{dotv11}), we have
\begin{equation}
\dot{V}_{1}\left( t\right) =-\frac{\alpha }{4}\cos ^{2}\frac{\theta \left(
t\right) }{4}\sinh ^{2}\left( \frac{\theta \left( t\right) }{4}-\frac{\pi }{4%
}\right) \tan \frac{\theta \left( t\right) }{4}.  \label{dotV1}
\end{equation}%
It is clear that $\dot{V}_{1}\left( t\right) \leq 0$ because $\tan \frac{%
\theta \left( t\right) }{4}\geq 0$ holds for $\theta \in \left( 0,\ 2\pi
\right) $.

Moreover, according to~(\ref{dotV1}), we obtain that if $\dot{V}_{1}\left(
t\right) =0$, there hold
\begin{equation*}
\cos^2\frac{\theta\left(t\right)}{4}=0 ,
\end{equation*}
or
\begin{equation*}
\sinh^2 \left( \frac{\theta \left(t\right)}{4}-\frac{\pi }{4}\right)=0,
\end{equation*}
or
\begin{equation*}
\tan\frac{\theta \left( t\right) }{4}=0.
\end{equation*}
Thus, we have $\theta\left( t\right)=2\pi$ or $\theta\left( t\right)=\pi$ or
$\theta\left( t\right)=0$ if $\dot{V}_{1}\left(t\right)=0$. In addition, it
can be readily computed from~(\ref{V11}) that the maximum value of $%
V_{1}\left(t\right)$ is obtained at $\theta=\pi$, and the minimum value of $%
V_{1}\left(t\right)$ is obtained at $\theta\left(t\right)=2\pi$ or $%
\theta\left(t\right)=0$. Consequently, there holds%
\begin{equation*}
\lim_{t\rightarrow \infty }\theta \left(t\right)=0\ \mathrm{or}\
\lim_{t\rightarrow \infty }\theta \left(t\right)=2\pi.
\end{equation*}
Further, in view of~(\ref{omegatheta}), we have $\boldsymbol{\omega}_{%
\mathrm{e}}=\boldsymbol{0}$ for $\theta\left(t\right)=2\pi$ or $%
\theta\left(t\right)=0$. This implies that the control goal~(\ref{aim}) is
achieved on the sliding surface $\boldsymbol{s}=\boldsymbol{0}$.

Hence, the proof is completed.
\end{proof}
 By Theorem~\ref{systemstatesconvergence}, it is proven that the unwinding phenomenon is avoided when the system states are on the sliding surface $\boldsymbol{s}=\boldsymbol{0}$. In the subsequent subsection, it is shown that the unwinding-free performance of the closed-loop attitude maneuver system~(\ref{systemmodel}) is guaranteed by designing a sliding mode control law with a dynamic parameter.
\subsection{Unwinding-Free Sliding Mode Control Law}

\label{controller}

In this section, we need to construct a control law such that the condition $%
\boldsymbol{s}=\boldsymbol{0}$ is achieved in finite-time. This condition
assures us that all the system states of the closed-loop attitude maneuver
error dynamics~(\ref{systemmodel}) arrive at the sliding surface $%
\boldsymbol{s}=\boldsymbol{0}$ in finite-time. Moreover, the unwinding
phenomenon is also avoided before the system states reach the sliding
surface.

Consider a class of state feedback control for the rest-to-rest attitude
maneuver error dynamics~(\ref{systemmodel}) with~(\ref{theta}) of a rigid
spacecraft in the following form,
\begin{equation}
\boldsymbol{u}=\boldsymbol{u}_{\mathrm{eq}}+\boldsymbol{u}_{\mathrm{n}},
\label{u}
\end{equation}%
where $\boldsymbol{u}_{\mathrm{eq}}$ is the equivalent control term for the
nominal system, $\boldsymbol{u}_{\mathrm{n}}$ is the switching control term,
which is designed to deal with the disturbance. Thus, the equivalent control
$\boldsymbol{u}_{\mathrm{eq}}$ can be obtained from the nominal system part
by setting $\dot{\boldsymbol{s}}=\boldsymbol{0}$, such that%
\begin{equation}
\dot{\boldsymbol{s}}=\dot{\boldsymbol{\omega }}_{\mathrm{e}}-\alpha \dot{\rho%
}\left( \boldsymbol{\sigma }_{\mathrm{e}}\right) \boldsymbol{\sigma }_{%
\mathrm{e}}-\alpha \rho \left( \boldsymbol{\sigma }_{\mathrm{e}}\right) \dot{%
\boldsymbol{\sigma }}_{\mathrm{e}}=\boldsymbol{0}.  \label{dots}
\end{equation}%
By setting $\boldsymbol{d}=\boldsymbol{0}$, the following nominal system
part of~(\ref{systemmodel}) can be obtained,%
\begin{equation*}
J\dot{\boldsymbol{\omega }}_{\mathrm{e}}=-\boldsymbol{\omega }_{\mathrm{e}%
}^{\times }J\boldsymbol{\omega }_{\mathrm{e}}+\boldsymbol{u}.
\end{equation*}%
Substituting the above equation into~(\ref{dots}) obtains
\begin{equation*}
\dot{\boldsymbol{s}}=J^{-1}\left( -\boldsymbol{\omega }_{\mathrm{e}}^{\times
}J\boldsymbol{\omega }_{\mathrm{e}}+\boldsymbol{u}_{\mathrm{eq}}\right)
-\alpha \dot{\rho}\left( \boldsymbol{\sigma }_{\mathrm{e}}\right)
\boldsymbol{\sigma }_{\mathrm{e}}-\alpha \rho \left( \boldsymbol{\sigma }_{%
\mathrm{e}}\right) \dot{\boldsymbol{\sigma }}_{\mathrm{e}}=0.
\end{equation*}%
By solving the above equation concerning $\boldsymbol{u}_{\mathrm{eq}}$, we
have
\begin{equation*}
\boldsymbol{u}_{\mathrm{eq}}=\boldsymbol{\omega }_{\mathrm{e}}^{\times }J%
\boldsymbol{\omega }_{\mathrm{e}}+\alpha J\dot{\rho}\left( \boldsymbol{%
\sigma }_{\mathrm{e}}\right) \boldsymbol{\sigma }_{\mathrm{e}}+\alpha J\rho
\left( \boldsymbol{\sigma }_{\mathrm{e}}\right) \dot{\boldsymbol{\sigma }}_{%
\mathrm{e}}.
\end{equation*}%
In addition, the control term $\boldsymbol{u}_{\mathrm{n}}$ is designed as,
\begin{equation}
\boldsymbol{u}_{\mathrm{n}}=-\left( \gamma _{1}+\gamma _{2}\left( t\right)
\right) \mathrm{sgn}\left( \boldsymbol{s}\right) ,  \label{u_n}
\end{equation}%
where $\gamma _{1}\geq \left\Vert \boldsymbol{d}\right\Vert _{\max }$, $%
\gamma _{2}\left( t\right) $ is a positive-valued function, and
\begin{equation*}
\mathrm{sgn}\left( \boldsymbol{s}\right) =\left[ \frac{s_{1}}{\left\vert
s_{1}\right\vert }\ \frac{s_{2}}{\left\vert s_{2}\right\vert }\ \frac{s_{3}}{%
\left\vert s_{3}\right\vert }\right] ^{\mathrm{T}}.
\end{equation*}

By concluding previous derivations, the following unwinding-free sliding
mode control (briefly, UFSMC) law is obtained,%
\begin{equation}
\left\{
\begin{array}{l}
\boldsymbol{u}=\boldsymbol{u}_{\mathrm{eq}}+\boldsymbol{u}_{\mathrm{n}}, \\
\boldsymbol{u}_{\mathrm{eq}}=\boldsymbol{\omega }_{\mathrm{e}}^{\times }J%
\boldsymbol{\omega }_{\mathrm{e}}+\alpha J\dot{\rho}\left( \boldsymbol{%
\sigma }_{\mathrm{e}}\right) \boldsymbol{\sigma }_{\mathrm{e}}+\alpha J\rho
\left( \boldsymbol{\sigma }_{\mathrm{e}}\right) \dot{\boldsymbol{\sigma }}_{%
\mathrm{e}}, \\
\boldsymbol{u}_{\mathrm{n}}=-\left( \gamma _{1}+\gamma _{2}\left( t\right)
\right) \mathrm{sgn}\left( \boldsymbol{s}\right) , \\
\boldsymbol{s}=\boldsymbol{\omega }_{\mathrm{e}}-\alpha \rho \left(
\boldsymbol{\sigma }_{\mathrm{e}}\right) \boldsymbol{\sigma }_{\mathrm{e}},
\\
\rho \left( \boldsymbol{\sigma }_{\mathrm{e}}\right) =\frac{\sinh g\left(
\boldsymbol{\sigma }_{\mathrm{e}}\right) }{1+\boldsymbol{\sigma }_{\mathrm{e}%
}^{\mathrm{T}}\boldsymbol{\sigma }_{\mathrm{e}}}, \\
g\left( \boldsymbol{\sigma }_{\mathrm{e}}\right) =\arctan \boldsymbol{e}^{%
\mathrm{T}}\boldsymbol{\sigma }_{\mathrm{e}}-\frac{\pi }{4}, \\
\boldsymbol{e}=\frac{\boldsymbol{\sigma }_{\mathrm{e}}\left( 0\right) }{%
\left\Vert \boldsymbol{\sigma }_{\mathrm{e}}\left( 0\right) \right\Vert },%
\end{array}%
\right.  \label{unwindingSMC}
\end{equation}%
where $\alpha $ is a positive number, $\boldsymbol{\sigma }_{\mathrm{e}%
}\left( 0\right) $ is the initial value of $\boldsymbol{\sigma }_{\mathrm{e}%
},$ $\gamma _{1}\geq \left\Vert \boldsymbol{d}\right\Vert _{\max }$, and $%
\gamma _{2}\left( t\right) $ is a positive-valued function, which is given
in the following theorem.

\begin{theorem}
\label{switchingconvergence} Consider a rest-to-rest attitude maneuver problem of a rigid spacecraft described by~(\ref%
{systemmodel}) with~(\ref{theta}). If the dynamic parameter $\gamma
_{2}\left( t\right) $ is chosen as
\begin{equation}
\gamma _{2}\left( t\right) =\frac{\alpha  }{\lambda _{\min }\left(
J^{-1}\right) }\left\vert \dot{h}\left( t\right) \right\vert ,
\label{gamma2}
\end{equation}%
where $\alpha>0 $, $\lambda _{\min }\left(
J^{-1}\right)$ represents the minimum eigenvalue of the inverse matrix of $J$, and%
\begin{equation}
h\left( t\right) =\rho \left( \boldsymbol{\sigma }_{\mathrm{e}}\right)\left\Vert\boldsymbol{\sigma }_{\mathrm{e}}\right\Vert,  \label{h}
\end{equation}%
with $\rho \left( \boldsymbol{\sigma }_{\mathrm{e}}\right)$ defining in~(\ref{rou}).
Then, the following conclusions are acquired.

(i) The switching function $\boldsymbol{s}$ converges to zero in finite time.

(ii) The unwinding phenomenon is avoided before the system states reach
the switching surface $\boldsymbol{s}=\boldsymbol{0}$.
\end{theorem}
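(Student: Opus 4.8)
The plan is to treat the two claims with two Lyapunov arguments, both resting on the observation that the equivalent control $\boldsymbol{u}_{\mathrm{eq}}$ exactly cancels the nominal part of $\dot{\boldsymbol{s}}$. First I would form $J\dot{\boldsymbol{s}}=J\dot{\boldsymbol{\omega}}_{\mathrm{e}}-\alpha J\dot{\rho}\boldsymbol{\sigma}_{\mathrm{e}}-\alpha J\rho\dot{\boldsymbol{\sigma}}_{\mathrm{e}}$ and substitute $J\dot{\boldsymbol{\omega}}_{\mathrm{e}}=-\boldsymbol{\omega}_{\mathrm{e}}^{\times}J\boldsymbol{\omega}_{\mathrm{e}}+\boldsymbol{u}+\boldsymbol{d}$ together with the control law~(\ref{unwindingSMC}). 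The terms $\boldsymbol{\omega}_{\mathrm{e}}^{\times}J\boldsymbol{\omega}_{\mathrm{e}}$, $\alpha J\dot{\rho}\boldsymbol{\sigma}_{\mathrm{e}}$ and $\alpha J\rho\dot{\boldsymbol{\sigma}}_{\mathrm{e}}$ all cancel, leaving the clean reduced dynamics $J\dot{\boldsymbol{s}}=\boldsymbol{u}_{\mathrm{n}}+\boldsymbol{d}=-(\gamma_{1}+\gamma_{2}(t))\,\mathrm{sgn}(\boldsymbol{s})+\boldsymbol{d}$, which is the workhorse for both parts.

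For conclusion (i) I would use the weighted function $V_{2}=\tfrac12\boldsymbol{s}^{\mathrm{T}}J\boldsymbol{s}$, positive definite since $J$ is symmetric positive definite, chosen precisely so that $\dot{V}_{2}=\boldsymbol{s}^{\mathrm{T}}J\dot{\boldsymbol{s}}=\boldsymbol{s}^{\mathrm{T}}(\boldsymbol{u}_{\mathrm{n}}+\boldsymbol{d})$ avoids the inconvenient $J^{-1}$. Then $\dot{V}_{2}=-(\gamma_{1}+\gamma_{2})\sum_{i}|s_{i}|+\boldsymbol{s}^{\mathrm{T}}\boldsymbol{d}$, and bounding $\boldsymbol{s}^{\mathrm{T}}\boldsymbol{d}\le\|\boldsymbol{d}\|_{\max}\|\boldsymbol{s}\|_{1}$ with $\gamma_{1}\ge\|\boldsymbol{d}\|_{\max}$ gives $\dot{V}_{2}\le-\gamma_{2}\|\boldsymbol{s}\|_{1}\le 0$. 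Using $\|\boldsymbol{s}\|_{1}\ge\|\boldsymbol{s}\|\ge\sqrt{2V_{2}/\lambda_{\max}(J)}$, I would cast this as $\dot{V}_{2}+c\,V_{2}^{1/2}\le 0$ for a positive $c$ (taking $\gamma_{1}$ with a strict margin over $\|\boldsymbol{d}\|_{\max}$ so the reaching gain is bounded below), and invoke Lemma~\ref{finitetime} with $\alpha=\tfrac12$ to conclude $\boldsymbol{s}\equiv\boldsymbol{0}$ after a finite time $T_{\mathrm{s}}\le 2V_{2}^{1/2}(0)/c$.

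For conclusion (ii) the key reformulation is that, by Theorem~\ref{systemstatesconvergence} and the shape of $V_{1}=\kappa-\cosh g(\theta)$ (strict maximum at $\theta=\pi$, minima at $\theta\in\{0,2\pi\}$), unwinding is avoided exactly when $\theta(t)$ never crosses $\pi$, equivalently when $\dot{\theta}=\boldsymbol{e}^{\mathrm{T}}\boldsymbol{\omega}_{\mathrm{e}}$ keeps the sign of $\alpha h$ it has on the surface (nonpositive for $\theta\in(0,\pi)$, nonnegative for $\theta\in(\pi,2\pi)$). The role of the dynamic parameter then becomes transparent: writing $\rho\boldsymbol{\sigma}_{\mathrm{e}}=h\boldsymbol{e}$ and differentiating, the closed loop reads $J\dot{\boldsymbol{\omega}}_{\mathrm{e}}=\alpha\dot{h}J\boldsymbol{e}-(\gamma_{1}+\gamma_{2})\mathrm{sgn}(\boldsymbol{s})+\boldsymbol{d}$, and the choice~(\ref{gamma2}) makes $\gamma_{2}\ge\|\alpha\dot{h}J\boldsymbol{e}\|$ because $\|J\boldsymbol{e}\|\le\lambda_{\max}(J)=1/\lambda_{\min}(J^{-1})$. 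Thus $\gamma_{2}$ is sized exactly to dominate the state-dependent feedforward that the time-varying $\rho$ injects along the Euler axis.

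The main obstacle will be converting this norm domination into the sign statement for $\dot{\theta}$, since the Euler-axis projection of the switching action passes through $J^{-1}$ and the discontinuous $\mathrm{sgn}(\boldsymbol{s})$, which do not decouple axis-by-axis for a general inertia $J$. I would handle it by a first-crossing argument: from the rest condition $\boldsymbol{\omega}_{\mathrm{e}}(0)=\boldsymbol{0}$ (so $\dot{\theta}(0)=0$ and $\boldsymbol{s}(0)=-\alpha h(0)\boldsymbol{e}$), suppose $t^{\ast}<T_{\mathrm{s}}$ were the first instant with $\theta(t^{\ast})=\pi$; there $h=\rho=0$, hence $\boldsymbol{s}(t^{\ast})=\boldsymbol{\omega}_{\mathrm{e}}(t^{\ast})$ and $\dot{\theta}(t^{\ast})=\boldsymbol{e}^{\mathrm{T}}\boldsymbol{s}(t^{\ast})$, and I would use the oversized gain $\gamma_{1}+\gamma_{2}$ to force $\dot{\theta}$ to oppose the approach to $\pi$, contradicting the crossing. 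Once $\theta$ is confined to its initial side of $\pi$ throughout the reaching phase, part (i) together with Theorem~\ref{systemstatesconvergence} deliver convergence to the nearest of $\{0,2\pi\}$ with net rotation below $\pi$, which is the unwinding-free property.
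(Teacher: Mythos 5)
Your treatment of conclusion (i) is correct and takes a genuinely different route from the paper's: you work with the reduced dynamics $J\dot{\boldsymbol{s}}=\boldsymbol{u}_{\mathrm{n}}+\boldsymbol{d}$ and the weighted function $V_{2}=\tfrac{1}{2}\boldsymbol{s}^{\mathrm{T}}J\boldsymbol{s}$, whereas the paper keeps the unweighted $V_{2}=\tfrac{1}{2}\boldsymbol{s}^{\mathrm{T}}\boldsymbol{s}$, writes $\dot{\boldsymbol{s}}=J^{-1}\left(\boldsymbol{u}_{\mathrm{n}}+\boldsymbol{d}\right)$, and then needs the inequality~(\ref{inequ}), $\boldsymbol{s}^{\mathrm{T}}J^{-1}\mathrm{sgn}\left(\boldsymbol{s}\right)\geq\lambda_{\min}\left(J^{-1}\right)\left\Vert\boldsymbol{s}\right\Vert$, which your weighting avoids. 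Your point that the reaching coefficient can vanish (since $\gamma_{2}(t)=0$ whenever $\dot{h}(t)=0$) and your fix via a strict margin $\gamma_{1}>\left\Vert\boldsymbol{d}\right\Vert_{\max}$ is in fact more careful than the paper, which invokes Lemma~\ref{finitetime} with the time-varying coefficient $\sqrt{2}\gamma_{2}(t)\lambda_{\min}\left(J^{-1}\right)$ although that lemma requires a constant positive $\lambda$. One caution, though: the paper's unweighted choice is not an accident. Because $\boldsymbol{\omega}_{\mathrm{e}}(0)=\boldsymbol{0}$ makes $\boldsymbol{s}(0)=-\alpha h(0)\boldsymbol{e}$ exactly parallel to $\boldsymbol{e}$, the paper gets the exact anchor $V_{2}(0)=\tfrac{1}{2}\left(\boldsymbol{e}^{\mathrm{T}}\boldsymbol{s}(0)\right)^{2}$, which is what part (ii) lives on; with your weighted $V_{2}$ this anchor inflates by a factor depending on $\boldsymbol{e}^{\mathrm{T}}J\boldsymbol{e}/\lambda_{\min}\left(J\right)\geq 1$ and the telescoping below no longer closes.

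For conclusion (ii) there is a genuine gap. The paper's mechanism is: integrate the reaching inequality to obtain~(\ref{v2inequ}), $V_{2}^{1/2}(t)\leq V_{2}^{1/2}(0)-\tfrac{\lambda_{\min}\left(J^{-1}\right)}{\sqrt{2}}\int_{0}^{t}\gamma_{2}(\tau)\,\mathrm{d}\tau$; introduce the scalar $v=\boldsymbol{e}^{\mathrm{T}}\boldsymbol{s}=\dot{\theta}-\alpha h$; combine $\tfrac{1}{2}v^{2}(t)\leq V_{2}(t)$ with the exact anchor $V_{2}(0)=\tfrac{1}{2}v^{2}(0)$ to get $\left\vert v(t)\right\vert\leq\left\vert v(0)\right\vert-\lambda_{\min}\left(J^{-1}\right)\int_{0}^{t}\gamma_{2}$; and then, for instance when $\theta(0)\in\left(0,\pi\right)$ so that $v(0)=-\alpha h(0)>0$, unwind the absolute values to reach $\dot{\theta}(t)\leq\alpha h(t)-\alpha h(0)-\lambda_{\min}\left(J^{-1}\right)\int_{0}^{t}\gamma_{2}=\int_{0}^{t}\left(\alpha\dot{h}(\tau)-\lambda_{\min}\left(J^{-1}\right)\gamma_{2}(\tau)\right)\mathrm{d}\tau\leq 0$, the last step being exactly where the choice~(\ref{gamma2}) enters: $\lambda_{\min}\left(J^{-1}\right)\gamma_{2}=\alpha\left\vert\dot{h}\right\vert\geq\alpha\dot{h}$ pointwise (and dually $\geq-\alpha\dot{h}$ for the other case). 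This pins the sign of $\dot{\theta}(t)$ for every $t$ in the reaching phase, which is the paper's (stronger, monotonicity) statement of unwinding avoidance. Your first-crossing argument does not deliver this: at a putative first time $t^{\ast}$ with $\theta(t^{\ast})=\pi$ you only know $\boldsymbol{s}(t^{\ast})=\boldsymbol{\omega}_{\mathrm{e}}(t^{\ast})$ and $\dot{\theta}(t^{\ast})=\boldsymbol{e}^{\mathrm{T}}\boldsymbol{s}(t^{\ast})\geq 0$ when approaching from below, which is perfectly consistent with the closed-loop dynamics, so no contradiction follows. The hoped-for step of "using the oversized gain to oppose the approach" is precisely the obstacle you yourself identify: the gain enters $\ddot{\theta}$ through $-\left(\gamma_{1}+\gamma_{2}\right)\boldsymbol{e}^{\mathrm{T}}J^{-1}\mathrm{sgn}\left(\boldsymbol{s}\right)$, whose sign is indeterminate for a general inertia matrix, and even a favorable sign at the instant $t^{\ast}$ cannot prevent a crossing that arrives with $\dot{\theta}(t^{\ast})>0$. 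The missing idea is the integral anchoring argument above, which converts norm domination into a signed integrand without ever needing the sign of $\boldsymbol{e}^{\mathrm{T}}J^{-1}\mathrm{sgn}\left(\boldsymbol{s}\right)$.
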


\begin{proof}
To prove (i), we chose the following Lyapunov function,
\begin{equation}
V_{2}\left( t\right) =\frac{1}{2}\boldsymbol{s}^{\mathrm{T}}\boldsymbol{s}.
\label{v2}
\end{equation}%
Taking time derivative of the above equation, and using~(\ref{s}), yields
\begin{align*}
\dot{V}_{2}\left( t\right) & =\boldsymbol{s}^{\mathrm{T}}\dot{\boldsymbol{s}}
\\
& =\boldsymbol{s}^{\mathrm{T}}\left( \dot{\boldsymbol{\omega }}_{\mathrm{e}%
}-\alpha \dot{\rho}\left( \boldsymbol{\sigma }_{\mathrm{e}}\right)
\boldsymbol{\sigma }_{\mathrm{e}}-\alpha \rho \left( \boldsymbol{\sigma }_{%
\mathrm{e}}\right) \boldsymbol{\dot{\sigma}}_{\mathrm{e}}\right) .
\end{align*}%
Substituting the second equation of (\ref{systemmodel}) and controller~(\ref%
{unwindingSMC}) into the above equation, we arrive at%
\begin{align}
\dot{V}_{2}\left( t\right) & =\boldsymbol{s}^{\mathrm{T}}\left( J^{-1}\left(
-\boldsymbol{\omega }_{\mathrm{e}}^{\times }J\boldsymbol{\omega }_{\mathrm{e}%
}+\boldsymbol{u}+\boldsymbol{d}\right) -\alpha \dot{\rho}\left( \boldsymbol{%
\sigma }_{\mathrm{e}}\right) \boldsymbol{\sigma }_{\mathrm{e}}\right.  \notag
\\
& \quad \left. -\alpha \rho \left( \boldsymbol{\sigma }_{\mathrm{e}}\right)
\dot{\boldsymbol{\sigma }}_{\mathrm{e}}\right)  \notag \\
& =-\boldsymbol{s}^{\mathrm{T}}J^{-1}\left( \gamma _{1}+\gamma _{2}\left(
t\right) \right) \mathrm{sgn}\left( \boldsymbol{s}\right) +\boldsymbol{s}^{%
\mathrm{T}}J^{-1}\boldsymbol{d}  \notag \\
& =-\gamma _{2}\left( t\right) \boldsymbol{s}^{\mathrm{T}}J^{-1}\mathrm{sgn}%
\left( \boldsymbol{s}\right) -\boldsymbol{s}^{\mathrm{T}}J^{-1}\left( \gamma
_{1}-\left\Vert \boldsymbol{d}\right\Vert \right)  \notag \\
& \leq -\gamma _{2}\left( t\right) \boldsymbol{s}^{\mathrm{T}}J^{-1}\mathrm{%
sgn}\left( \boldsymbol{s}\right) .  \label{dotv21}
\end{align}%
Obviously, there holds%
\begin{equation}
\boldsymbol{s}^{\mathrm{T}}J^{-1}\mathrm{sgn}\left( \boldsymbol{s}\right)
\geq \lambda _{\min }\left( J^{-1}\right) \left\Vert \boldsymbol{s}%
\right\Vert .  \label{inequ}
\end{equation}%
According to~(\ref{inequ}), one deduces from~(\ref{dotv21}) that
\begin{equation*}
\dot{V}_{2}\left( t\right) \leq -\gamma _{2}\left( t\right) \lambda _{\min
}\left( J^{-1}\right) \left\Vert \boldsymbol{s}\right\Vert .
\end{equation*}%
By~(\ref{v2}) and~(\ref{inequ}), the above equation can be rewritten as%
\begin{align}
\dot{V}_{2}\left( t\right) & \leq -\sqrt{2}\gamma _{2}\left( t\right)
\lambda _{\min }\left( J^{-1}\right) \left( \frac{1}{2}\boldsymbol{s}^{%
\mathrm{T}}\boldsymbol{s}\right) ^{\frac{1}{2}}  \notag \\
& =-\sqrt{2}\gamma _{2}\left( t\right) \lambda _{\min }\left( J^{-1}\right)
V_{2}^{\frac{1}{2}}\left( t\right) .  \label{dotv2}
\end{align}%
Using Lemma~\ref{finitetime}, it is immediate to conclude that the switching
function $\boldsymbol{s}$ converges to zero in finite time.

Thus, (i) is proven.

Next, we prove that the unwinding-free performance is ensured by the
developed controller~(\ref{unwindingSMC}) with~(\ref{gamma2}) when the
system states are outside the switching surface $\boldsymbol{s}=\boldsymbol{0%
}$.

It can be further derived from~(\ref{dotv2}) that
\begin{equation*}
\frac{\dot{V}_{2}\left( t\right) }{V_{2}^{\frac{1}{2}}\left( t\right) }\leq -%
\sqrt{2}\gamma _{2}\left( t\right) \lambda _{\min }\left( J^{-1}\right) .
\end{equation*}%
Suppose the initial time is $t_{0}=0$. By taking integral of both sides of
the above equation, we have
\begin{equation*}
\int_{0}^{t}\frac{\dot{V}_{2}\left( \tau \right) }{V_{2}^{\frac{1}{2}}\left(
\tau \right) }\mathrm{d}\tau \leq -\sqrt{2}\lambda _{\min }\left(
J^{-1}\right) \int_{0}^{t}\gamma _{2}\left( \tau \right) \mathrm{d}\tau ,
\end{equation*}%
or, equivalently,
\begin{equation}
V_{2}^{\frac{1}{2}}\left( t\right) \leq -\frac{\lambda _{\min }\left(
J^{-1}\right) }{\sqrt{2}}\int_{0}^{t}\gamma _{2}\left( \tau \right) \mathrm{d%
}\tau +V_{2}^{\frac{1}{2}}\left( 0\right) .  \label{v2inequ}
\end{equation}

Let
\begin{equation}
v\left( t\right) =\boldsymbol{e}^{\mathrm{T}}\boldsymbol{s}.  \label{vinter}
\end{equation}%
Then, applying~(\ref{s}),~(\ref{sigma}), Lemma~\ref{thetadynamic}, and~(\ref%
{h}) to~(\ref{vinter}), yields
\begin{align}
v\left( t\right) & =\boldsymbol{e}^{\mathrm{T}}\boldsymbol{\omega }_{\mathrm{%
e}}-\alpha \rho \left( \boldsymbol{\sigma }_{\mathrm{e}}\right) \boldsymbol{e%
}^{\mathrm{T}}\boldsymbol{\sigma }_{\mathrm{e}}  \notag \\
& =\dot{\theta}\left( t\right) -\alpha \rho \left( \boldsymbol{\sigma }_{%
\mathrm{e}}\right) \left\Vert \boldsymbol{\sigma }_{\mathrm{e}}\right\Vert
\notag \\
& =\dot{\theta}\left( t\right) -\alpha h\left( t\right) .  \label{v}
\end{align}%
In addition, it can be obtained from~(\ref{vinter}) that
\begin{align*}
v^{2}\left( t\right) & =\left( \boldsymbol{e}^{\mathrm{T}}\boldsymbol{s}%
\right) ^{\mathrm{T}}\boldsymbol{e}^{\mathrm{T}}\boldsymbol{s} \\
& \leq \left\Vert \boldsymbol{ee}^{\mathrm{T}}\right\Vert \left\Vert
\boldsymbol{s}\right\Vert ^{2} \\
& \leq \lambda _{\max }\left( \boldsymbol{ee}^{\mathrm{T}}\right) \left\Vert
\boldsymbol{s}\right\Vert ^{2},
\end{align*}%
where $\lambda _{\max }\left( \boldsymbol{ee}^{\mathrm{T}}\right)$ is the maximum eigenvalue of the matrix $ \boldsymbol{ee}^{\mathrm{T}}$. Note that the Euler axis $\boldsymbol{e}$ is a unit vector, thus the matrix $%
\boldsymbol{ee}^{\mathrm{T}}$ is an idempotent matrix. Consequently, we have
$\lambda _{\max }\left( \boldsymbol{ee}^{\mathrm{T}}\right) =1$. Then, it is
clear that%
\begin{equation*}
v^{2}\left( t\right) \leq \left\Vert \boldsymbol{s}\right\Vert ^{2}.
\end{equation*}%
This together with~(\ref{v2}), results in%
\begin{equation}
\frac{1}{2}v^{2}\left( t\right) \leq V_{2}\left( t\right) .  \label{V2t}
\end{equation}

In this paper, the rest-to-rest attitude maneuver problem is considered,
thus the initial attitude velocity is zero, i.e., $\boldsymbol{\omega }_{%
\mathrm{e}}\left( 0\right) =\boldsymbol{0}$. Further, it can be obtained
from~(\ref{dottheta}) that $\dot{\theta}\left( t\right) =0$. In this case,
by (\ref{v}), the initial value of $v\left( t\right) $ can be obtained as%
\begin{equation}
v\left( 0\right) =-\alpha h\left( 0\right) .  \label{v0}
\end{equation}%
As $\boldsymbol{\omega }_{\mathrm{e}}\left( 0\right) =\boldsymbol{0}$, then
by~(\ref{v2}),~(\ref{s}),~(\ref{h}), and~(\ref{v0}), the initial value of $%
V_{2}\left( 0\right) $ can be obtained as
\begin{align}
V_{2}\left( 0\right) & =\frac{1}{2}\boldsymbol{s}^{\mathrm{T}}\left(
0\right) \boldsymbol{s}\left( 0\right)  \notag \\
& =\frac{1}{2}\alpha ^{2}\rho ^{2}\left( \boldsymbol{\sigma }_{\mathrm{e}%
}\left( 0\right) \right) \left\Vert \boldsymbol{\sigma }_{\mathrm{e}}\left(
0\right) \right\Vert ^{2}  \notag \\
& =\frac{1}{2}v^{2}\left( 0\right) .  \label{V20}
\end{align}%
Substituting~(\ref{V2t}) and~(\ref{V20}) into~(\ref{v2inequ}) gives
\begin{align}
\left( \frac{1}{2}v^{2}\left( t\right) \right) ^{\frac{1}{2}}& \leq V^{^{%
\frac{1}{2}}}\left( t\right) \leq -\frac{\lambda _{\min }\left(
J^{-1}\right) }{\sqrt{2}}\int_{0}^{t}\gamma _{2}\left( \tau \right) \mathrm{d%
}\tau  \notag \\
& \qquad\qquad\quad +\left( \frac{1}{2}v^{2}\left( 0\right) \right) ^{\frac{1%
}{2}},  \label{vinequ}
\end{align}%
which can be rewritten as
\begin{equation}
\left\vert v\left( t\right) \right\vert \!\leq \!-\lambda _{\min }\left(
J^{-1}\right) \int_{0}^{t}\gamma _{2}\left( \tau \right) \mathrm{d}\tau
\!+\!\left\vert v\left( 0\right) \right\vert .  \label{inequ1}
\end{equation}%
As $\gamma _{2}\left( t\right) >0$, it can be obtained from~(\ref{inequ1})
that $v\left( t\right) $ will decrease to $0$ when $v\left( 0\right) >0$,
and $v\left( t\right) $ will increase to $0$ when $v\left( 0\right) <0$.

To prove the unwinding-free phenomenon of the proposed control law~(\ref%
{unwindingSMC}) with $\gamma _{2}\left( t\right) $ being chosen in~(\ref%
{gamma2}), we need to prove that $\dot{\theta}\left( t\right) < 0$ for $%
\theta \left(0\right)\in \left( 0,\pi \right) $, and $\dot{\theta}\left(
t\right) > 0$ for $\theta \left(0\right)\in \left( \pi ,2\pi \right) $. To
this end, the following two cases are considered to complete the proof.

(1) When $\theta \left( 0\right) \!\in \!\left( 0,\pi \right) ,$ there holds
$\sinh \left( \frac{\theta \left( 0\right) }{4}-\frac{\pi }{4}\right) <0$.
Then, it can be obtained from~(\ref{sigma}), ~(\ref{routheta}), and~(\ref{h}%
) that there holds $h\left( 0\right) =\rho \left( \theta \left( 0\right)
\right) \tan \frac{\theta \left( 0\right) }{4}<0$. Further, according to~(%
\ref{v0}), $v\left( 0\right) =-\alpha h\left( 0\right) >0$ holds. Thus, $%
v\left( t\right) $ will decrease to zero. In such a case, by using~(\ref{v})
and~(\ref{v0}),~(\ref{inequ1}) can be rewritten as
\begin{equation*}
\dot{\theta}\left( t\right) -\alpha h\left( t\right) \leq -\lambda _{\min
}\left( J^{-1}\right) \int_{0}^{t}\gamma _{2}\left( \tau \right) \mathrm{d}%
\tau -\alpha h\left( 0\right) .
\end{equation*}%
It can be further rewritten as
\begin{align}
\dot{\theta}\left( t\right) & \leq -\lambda _{\min }\left( J^{-1}\right)
\int_{0}^{t}\gamma _{2}\left( \tau \right) \mathrm{d}\tau -\alpha h\left(
0\right) +\alpha h\left( t\right)  \notag \\
& =-\lambda _{\min }\left( J^{-1}\right) \int_{0}^{t}\gamma _{2}\left( \tau
\right) \mathrm{d}\tau +\alpha \int_{0}^{t}\frac{\mathrm{d}h\left( \tau
\right) }{\mathrm{d}\tau }d\tau  \notag \\
& =-\int_{0}^{t}\left( \lambda _{\min }\left( J^{-1}\right) \gamma
_{2}\left( \tau \right) -\alpha \dot{h}\left( \tau \right) \right) \mathrm{d}%
\tau .  \label{case1theta}
\end{align}

If $\dot{h}\left( t\right) >0$, then it can be obtained from~(\ref{gamma2})
that $\gamma _{2}\left( t\right) \!=\!\frac{\alpha \dot{h}\left( t\right) }{%
\lambda _{\min }\left( J^{-1}\right) }.$ It is followed from~(\ref%
{case1theta}) that $\dot{\theta}\left( t\right) \leq 0.$

If $\dot{h}\left( t\right) <0,$then it can be obtained from~(\ref{gamma2})
that $\gamma _{2}\left( t\right) =-\frac{\alpha \dot{h}\left( t\right) }{%
\lambda _{\min }\left( J^{-1}\right) }$. With this, it can be derived from (%
\ref{case1theta}) that%
\begin{equation*}
\dot{\theta}\left( t\right) \leq 2\alpha \int_{0}^{t}\dot{h}\left( \tau
\right) \mathrm{d}\tau \leq 0.
\end{equation*}

Thus, it can be obtained from above two cases that when $\theta\left(0%
\right)\in\left(0,\pi\right)$, there holds $\dot{\theta}\left(t\right)\leq 0$%
.

(2) When $\theta \left( 0\right) \!\in \!\left( \pi ,2\pi \right) ,$ there
holds $\sinh \left( \frac{\theta \left( 0\right) }{4}-\frac{\pi }{4}\right)
<0$. Then, it can be obtained from~(\ref{sigma}), ~(\ref{routheta}), and~(%
\ref{h}) that there holds $h\left( 0\right) =\rho \left( \theta \left(
0\right) \right) \tan \frac{\theta \left( 0\right) }{4}>0$. Further,
according to~(\ref{v0}), $v\left( 0\right) =-\alpha h\left( 0\right) <0$
holds. Thus, $v\left( t\right) $ will increase to zero. In such a case, by
using~(\ref{v}) and~(\ref{v0}),~(\ref{inequ1}) can be rewritten as
\begin{equation*}
-\dot{\theta}\left( t\right) +\alpha h\left( t\right) \leq -\lambda _{\min
}\left( J^{-1}\right) \int_{0}^{t}\gamma _{2}\left( \tau \right) \mathrm{d}%
\tau +\alpha h\left( 0\right) .
\end{equation*}%
Then, the following equation can be further obtained,%
\begin{align}
\dot{\theta}\left( t\right) & \geq \lambda _{\min }\left( J^{-1}\right)
\int_{0}^{t}\gamma _{2}\left( \tau \right) \mathrm{d}\tau +\alpha h\left(
t\right) -\alpha h\left( 0\right)  \notag \\
& =\lambda _{\min }\left( J^{-1}\right) \int_{0}^{t}\gamma _{2}\left( \tau
\right) \mathrm{d}\tau +\alpha \int_{0}^{t}\frac{\mathrm{d}h\left( \tau
\right) }{\mathrm{d}\tau }\mathrm{d}\tau  \notag \\
& \geq \int_{0}^{t}\left( \lambda _{\min }\left( J^{-1}\right) \gamma
_{2}\left( \tau \right) +\alpha \dot{h}\left( \tau \right) \right) \mathrm{d}%
\tau .  \label{case2theta}
\end{align}

If $\dot{h}\left( t\right) >0$, there holds $\gamma _{2}\left( \tau \right)
\!=\!\frac{\alpha \dot{h}\left( t\right) }{\lambda _{\min }\left(
J^{-1}\right) }$ according to~(\ref{gamma2}). Substituting it into~(\ref%
{case2theta}), we have%
\begin{equation*}
\dot{\theta}\left( t\right) \geq 2\alpha \int_{0}^{t}\dot{h}\left( \tau
\right) \mathrm{d}\tau \geq 0.
\end{equation*}

If $\dot{h}\left( t\right) <0,$ there holds $\gamma _{2}\left( \tau \right)
\!=\!-\frac{\alpha \dot{h}\left( t\right) }{\lambda _{\min }\left(
J^{-1}\right) }$ from~(\ref{gamma2}). Substituting it into~(\ref{case2theta}%
), we have $\dot{\theta}\left( t\right) \geq 0.$\

Thus, it can be obtained from above two cases that when $\theta\left(0%
\right)\in\left(\pi,2\pi\right)$, the rotation angle $\theta\left(t\right)$
will increase to $2\pi$.

Based on the above discussion, the conclusion (ii) is proven.%
\end{proof}

In Theorem~\ref{switchingconvergence}, the unwinding-free performance before
the system states reach the switching surface is proven. In Theorem~\ref%
{systemstatesconvergence}, the unwinding-free performance when the system
states are constricted to the sliding surface is also shown. The results in
these two theorems have illustrated that the proposed UFSMC law~(\ref%
{unwindingSMC}) has the unwinding-free property.

\begin{remark}
One drawback of the control law~(\ref{u_n}) is that it is discontinuous due to the discontinuousness of $\boldsymbol{u}_{\mathrm{n}}$ about
the sliding surface $\boldsymbol{s}=\boldsymbol{0}$. This characteristic may cause an
undesirable chattering phenomenon. For practical implementations, the
controller must be smoothed. Thus, the discontinuous function $\mathrm{sgn}%
\left( \boldsymbol{s}\right) $ is replaced by the smooth continuous function
$\boldsymbol{l}\left( \boldsymbol{s}\right)=\left[ l\left( s_{1}\right) \
l\left( s_{2}\right) \ l\left( s_{3}\right) \right] ^{\mathrm{T}}$ with $%
l\left( s_{i}\right) $ in the following equation,
\begin{equation}
l\left( s_{i}\right) =\left\{
\begin{array}{lc}
\mathrm{sgn}\left( s_{i}\right) , & \mbox{if }\left\vert s_{i}\right\vert
\geq \varepsilon_1 , \\
\mathrm{\arctan }\frac{s_{i}\tan \left( 1\right) }{\varepsilon_1 }, &
\mbox{if
}\left\vert s_{i}\right\vert <\varepsilon_1 ,%
\end{array}%
i=1,2,3,\right.  \label{f_si}
\end{equation}%
where $\varepsilon $ is a small positive value. As $\varepsilon_1 $ approaches
zero, the performance of this boundary layer can be made arbitrarily close
to that of original control law.

Another drawback of the proposed control law~(\ref{unwindingSMC}) is that it suffers the singular problem because $\theta\left(t\right)=2\pi$ is a singular point for $\boldsymbol{\sigma}_{\mathrm{e}}$, which may cause an unbounded control magnitude. This potential drawback can be resolved by introducing a boundary layer about $\boldsymbol{\sigma}_{\mathrm{e}}$, such that
\begin{equation*}
\sigma _{\mathrm{e}i}=\left\{
\begin{array}{l}
\frac{\mathrm{sgn}\left(\sigma_{\mathrm{e}i}\right)}{\varepsilon_2},\ \text{if}\ \frac{1}{%
\left\vert \sigma _{\mathrm{e}i}\right\vert }\leq \varepsilon_2, \\
\sigma _{\mathrm{e}i},\ \text{if}\ \frac{1}{\left\vert \sigma _{\mathrm{e}%
i}\right\vert }>\varepsilon_2,%
\end{array}%
\right.
\end{equation*}
where $\varepsilon_2>0$, $\sigma _{\mathrm{e}i}$ is the $i$-th element of the vector $\boldsymbol{\sigma}_{\mathrm{e}}$, and
\begin{equation*}
\mathrm{sgn}\left( \sigma _{\mathrm{e}i}\right) =\left\{
\begin{array}{l}
-1,\ \text{if}\ \sigma _{\mathrm{e}i}<0, \\
1,\ \text{if}\ \sigma _{\mathrm{e}i}>0.%
\end{array}%
\right.
\end{equation*}
As $\varepsilon_2$ approaches
zero, the rotation angle $\theta\left(t\right)$ can be driven arbitrarily close to $2\pi$ if the initial value $\theta\left(0\right)$ of $\theta\left(t\right)$ is larger than $\pi$.
\end{remark}

The advantage of the proposed UFSMC law~(\ref{unwindingSMC}) is that the
unwinding phenomenon is avoided during the rigid spacecraft attitude
maneuver, and the disturbance is compensated by the designed controller.

\section{Simulation}

In this section, simulations are conducted to demonstrate the performance of
the presented UFSMC law~(\ref{unwindingSMC}) for rest-to-rest attitude
maneuvers of a rigid spacecraft. In addition, the SMC controller in~\cite%
{crassidis1996sliding} is adopted for comparison.

\subsection{Simulation Settings}

\subsubsection{Spacecraft parameter values}

\label{system parameters} The inertia matrix of the rigid spacecraft is $J=%
\mathrm{diag}\left[ 114\ 86\ 87\right] \mathrm{kg\cdot m}^{2}$. The initial
values of the attitude velocity $\boldsymbol{\omega}$ and attitude $%
\boldsymbol{\sigma}$ are $\boldsymbol{\omega}\left(0\right)=\left[ 0\ 0\ 0%
\right] ^{\text{T}}$ and $\boldsymbol{\sigma}\left(0\right)=\left[ 0\ 0\ 0%
\right] ^{\text{T}}$, respectively. The disturbance is $\boldsymbol{d}%
=10^{-2}\times\left[ \sin \left( 0.05t\right)\ 0.5\sin \left( 0.05t\right)\
-\cos \left( 0.05t\right) \right] ^{\mathrm{T}}$.

\subsubsection{Controller parameter values}

\label{controller parameter values} The tuning parameters of the proposed
UFSMC law~(\ref{unwindingSMC}) are chosen as
\begin{equation*}
\alpha =2,\gamma _{1}=30,\ \varepsilon _{1}=0.5,\ \varepsilon _{2}=0.0001.
\end{equation*}%
In addition, $\gamma _{2}\left( t\right) $ is obtained from~(\ref{gamma2}).
The parameters of the SMC controller \cite{crassidis1996sliding} are chosen
as,
\begin{equation*}
k=1.5,\ \lambda =-0.5,\ \varepsilon =0.5.
\end{equation*}%
%
%
%
%
%
%
%
%
%
%
%
%

\subsubsection{Control goal}

\label{control goal} The control goal is to perform two rest-to-rest
attitude maneuvers for the rigid spacecraft with system parameters given in
Section~\ref{system parameters}.
Two different scenarios of desired attitude values are given as follows.

Scenario A. The desired attitude and angular velocity are $\boldsymbol{\sigma%
}_{\mathrm{d}}\!=\!\left[0.1\ 0.2\ -0.3 \right] ^{\mathrm{T}},$ and $%
\boldsymbol{\omega}_{\mathrm{d}} \!=\!\left[ 0\ 0\ 0\right] ^{\mathrm{T}}$%
rad/s, respectively.

Scenario B. The desired attitude and angular velocity are $\boldsymbol{\sigma%
}_{\mathrm{d}}\!=\!\left[0.7809\ 0.4685\ -0.7809 \right] ^{\mathrm{T}},$ and
$\boldsymbol{\omega}_{\mathrm{d}} =\left[ 0\ 0\ 0\right] ^{\mathrm{T}}$%
rad/s, respectively.

In Scenario A, it can be obtained from~(\ref{sigma_e}) that $\boldsymbol{%
\sigma}_{\mathrm{e}}\left(0\right)=\left[0.1\ 0.2\ -0.3\right] ^{\mathrm{T}}$%
. Further, there holds $\theta\left(0\right)=1.4321<\pi$ according to~(\ref%
{sigma}). Thus, $\theta\left(t\right)=0$ is the nearest equilibrium. The
controller needs to guarantee that $\theta\left(t\right)$ decreases to zero
monotonically. In Scenario B, it can be obtained from~(\ref{sigma_e}) that $%
\boldsymbol{\sigma}_{\mathrm{e}}\left(0\right)=\left[0.7809\ 0.4685\ -0.7809%
\right] ^{\mathrm{T}}$. Further, there holds $\theta\left(0\right)=3.5036>%
\pi $ according to~(\ref{sigma}). Thus, the spacecraft needs to rotate $%
3.5036$ rad to reach the desired attitude if only $\theta\left(t\right)=0$
is chosen as the equilibrium. However, the spacecraft only needs to rotate $%
2.7796$ rad if $\theta\left(t\right)=2\pi$ is also considered as an
equilibrium.

\subsection{Simulation results}

\subsubsection{Simulation results for Scenario A}

The SMC controller in~\cite{crassidis1996sliding} and proposed UFSMC law (%
\ref{unwindingSMC}) are adopted to do simulations for Scenario A. The
simulation results are shown in Fig. \ref{fig_2}.

The response of $\theta\left(t\right)$ and angular velocity error $\omega_{%
\mathrm{e}i}, i=1,2,3$ are shown in Fig.~\ref{fig_A1} and Fig.~\ref{fig_C1},
respectively. It can be seen from Fig. \ref{fig_A1} and Fig. \ref{fig_C1}
that $\theta\left(t\right)$ and the angular velocity errors of the system~(%
\ref{systemmodel}) converge to $0$ in about $6\mathrm{s}$ by adopting the
proposed UFSMC law (\ref{unwindingSMC}), while the SMC controller needs $12%
\mathrm{s}$. The spacecraft attitude responses using Euler angles, i.e.,
Roll, Pitch, and Yaw, are shown in Fig. \ref{fig_E1}, which indicates that
the attitude maneuver problem is effectively settled by the controller UFSMC
law (\ref{unwindingSMC}) and SMC law. 
The time evolution of control torques $u_{i},i=1,2,3$ are shown in Fig. \ref%
{fig_D1}. The control torque of the proposed UFSMC law (\ref{unwindingSMC})
is smaller than that of the SMC controller.

In conclusion, the UFSMC controller can obtain higher pointing accuracy and
better stability in a shorter time.
\begin{figure*}[!t]
\centering
\subfigure[Time response of rotation angle $\theta\left(t\right)$]{
		\label{fig_A1}
		\includegraphics[width=3in]{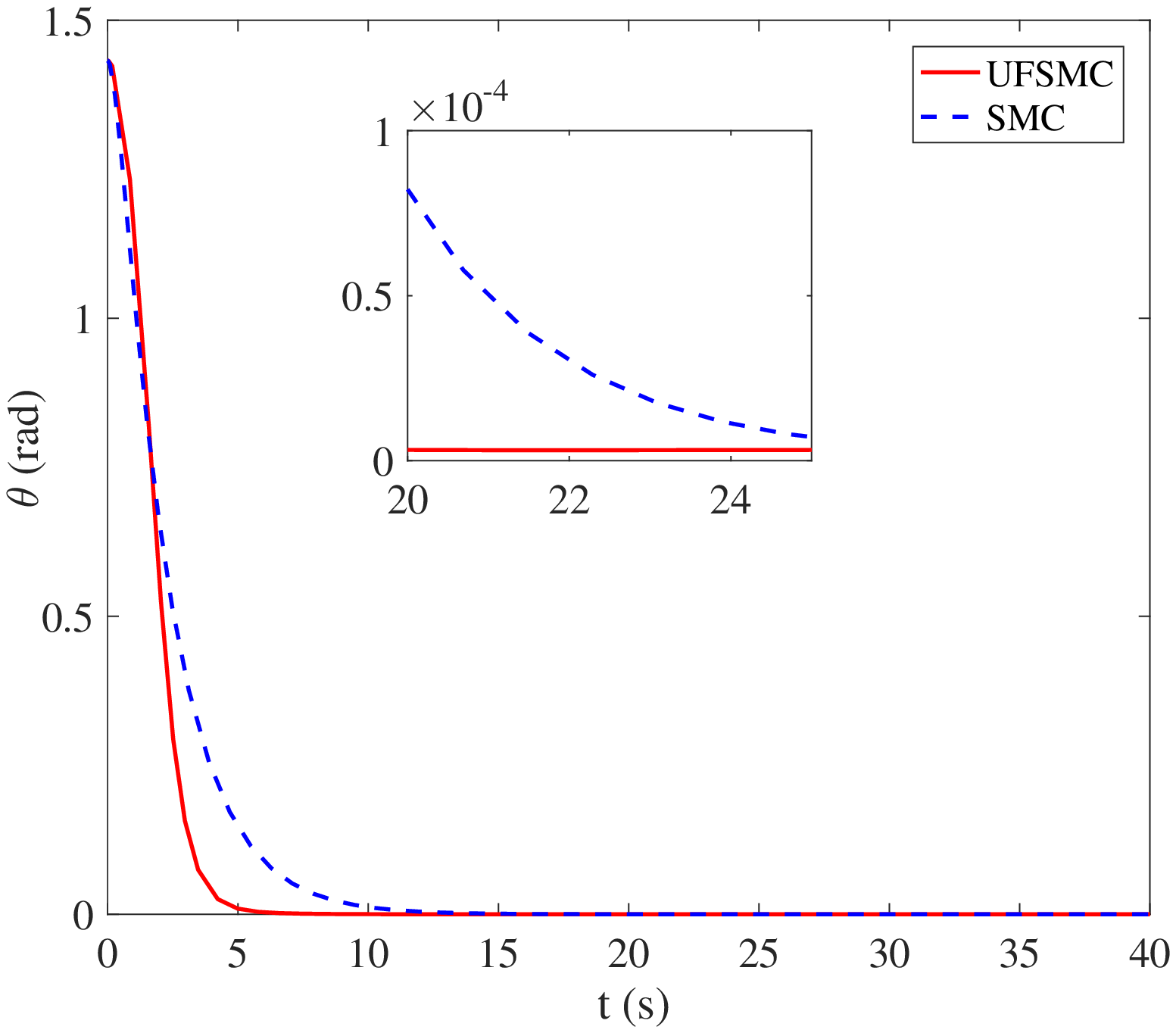}} 
\hspace{1cm}
\subfigure[Time response of the angular velocity]{
		\label{fig_C1}
		\includegraphics[width=3in]{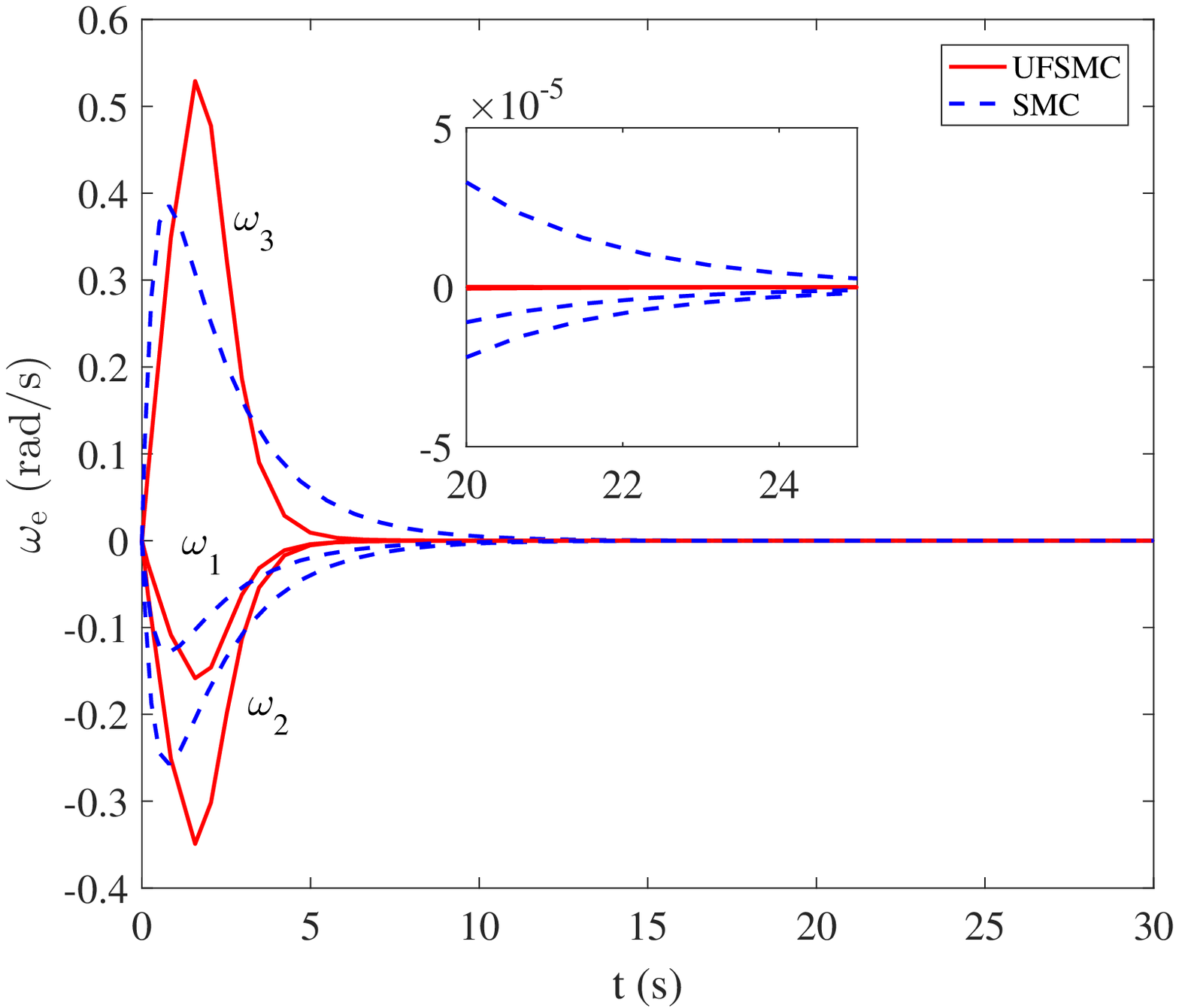}} 
\subfigure[Evolution of the Euler angles $\phi,\theta, \psi $ for the Scenario A]{
		\label{fig_E1}
		\includegraphics[width=3in]{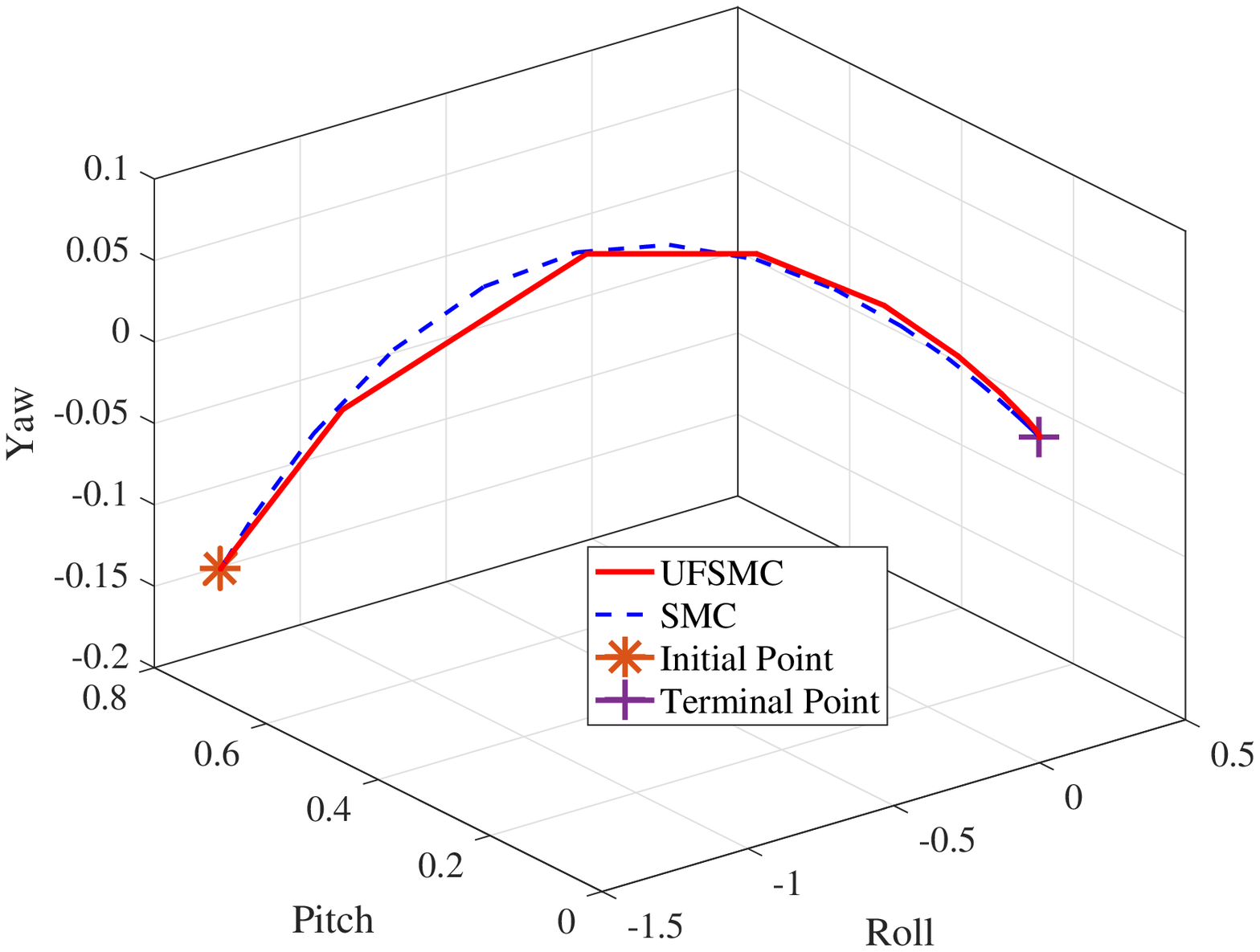}} 
\hspace{1cm}
\subfigure[Time response of the control torques]{
		\label{fig_D1}
		\includegraphics[width=3in]{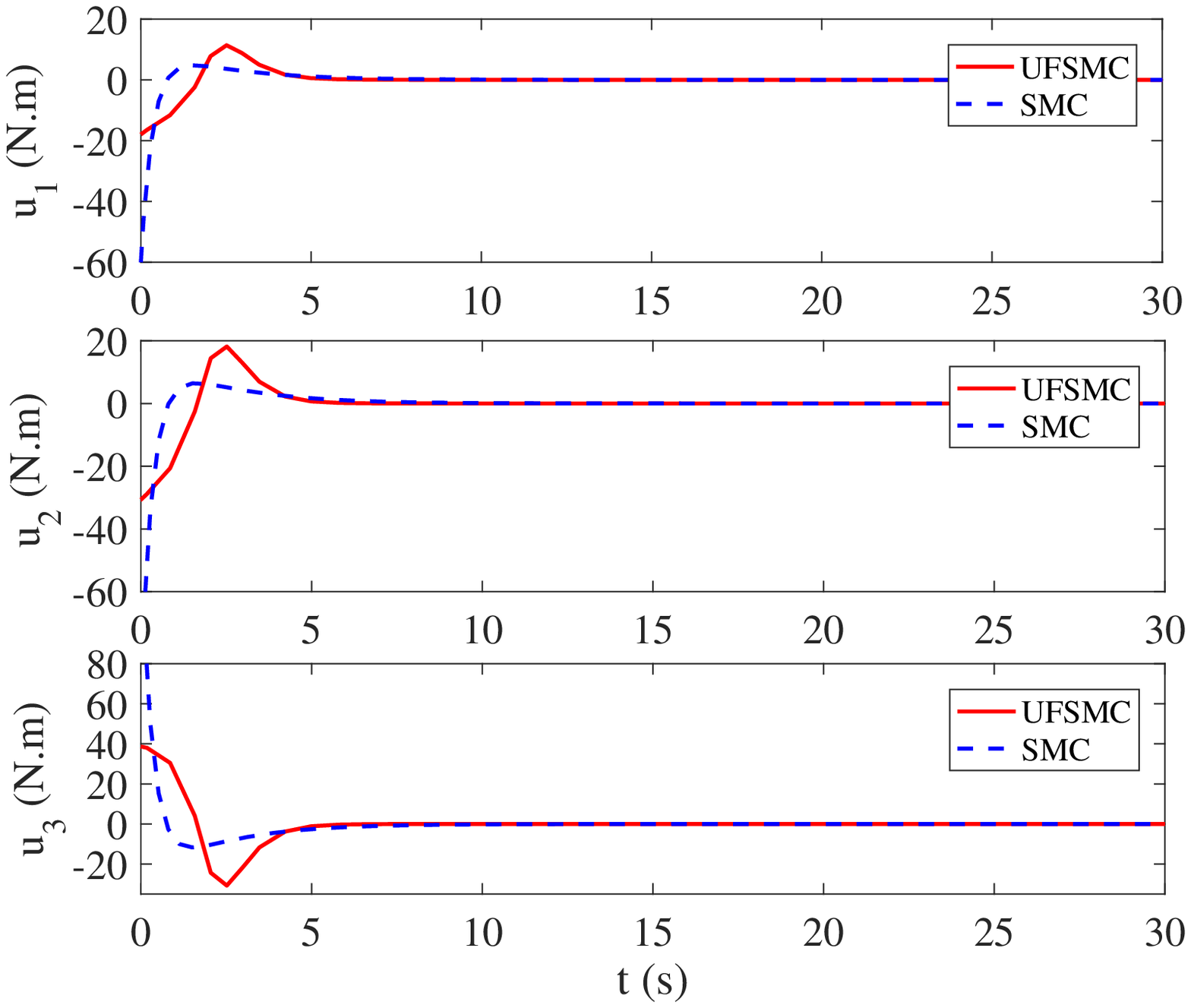}}
\caption{Comparison results of UFSMC law (\protect\ref{unwindingSMC}) and
SMC~\protect\cite{crassidis1996sliding} for Scenario A }
\label{fig_2}
\end{figure*}

\subsubsection{Simulation results for Scenario B}

The SMC controller in~\cite{crassidis1996sliding}, and the proposed UFSMC
law (\ref{unwindingSMC}) are adopted to do simulations for Scenario B. The
simulation results are summarized in Fig. \ref{fig_3}.

The response of the rotation angle $\theta\left(t\right)$ is shown in Fig. %
\ref{fig_A3}. The principle rotation angle $\theta\left(t\right)$ converges
to $0$ in about $14\mathrm{s}$ by adopting the SMC controller in~\cite%
{crassidis1996sliding}, while $\theta\left(t\right)$ converges to $2\pi$ in
about $6\mathrm{s}$ by adopting the proposed UFSMC~(\ref{unwindingSMC}).
This means that the rigid spacecraft needs to rotate $3.5036$ to reach the
desired attitude under the controller SMC in~\cite{crassidis1996sliding},
while the rigid spacecraft only needs to rotate $2.77$ to reach the desired
attitude under the proposed UFSMC~(\ref{unwindingSMC}). The behavior of
angular velocity error $\omega_{\mathrm{e}i}, i=1,2,3$ is shown in Fig. \ref%
{fig_C3}. It can be observed from Fig. \ref{fig_C3} that the attitude
velocity of the rigid spacecraft (\ref{systemmodel}) converges to $0$ in
about $6\mathrm{s}$ by using the proposed UFSMC law (\ref{unwindingSMC}),
while the SMC law needs a longer time. The spacecraft attitude responses
using Euler angles, i.e., Roll, Pitch, and Yaw, are the roll, pitch, and yaw
angles, respectively) are shown in Fig. \ref{fig_B3}. The maneuver angle of
the UFSMC law~(\ref{unwindingSMC}) is smaller than that of SMC law. This
means that the presented UFSMC law~(\ref{unwindingSMC}) can avoid the
unwinding phenomenon successfully, but the SMC controller can not. The
control torques $u_{i},i=1,2,3$ are shown in Fig. \ref{fig_D3}, which
indicates that the attitude maneuver is effectively settled by the UFSMC law(%
\ref{unwindingSMC}) and SMC controller. It can also be observed that the
control torque of the proposed control law is less than that of the SMC
controller.
\begin{figure*}[!t]
\centering
\subfigure[Time response of rotation angle $\theta\left(t\right)$]{
		\label{fig_A3}
		\includegraphics[width=3in]{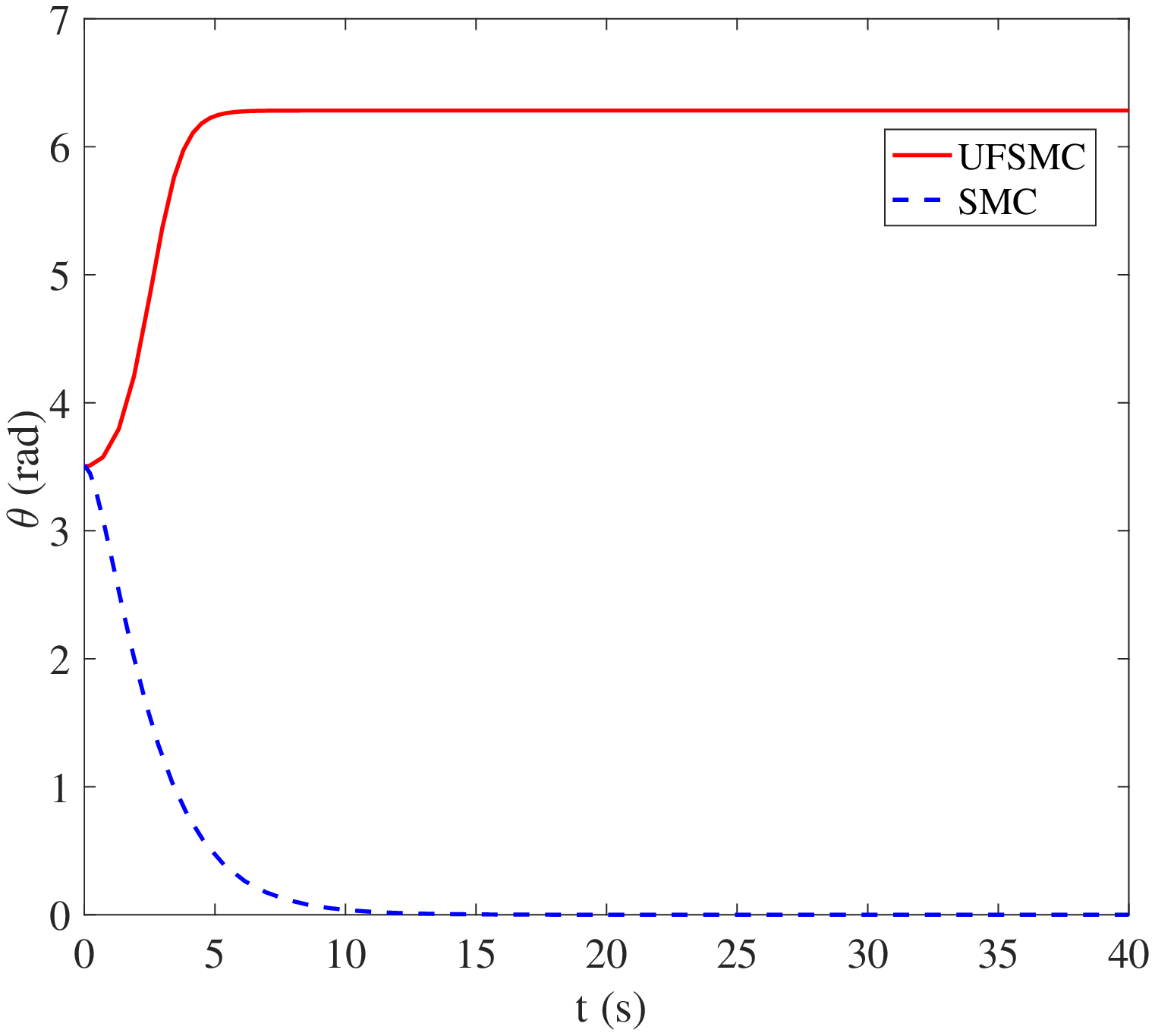}} 
\hspace{1cm}
\subfigure[Time response of the angular velocity $\boldsymbol{\omega}_{\mathrm{e}}$]{
		\label{fig_C3}
		\includegraphics[width=3in]{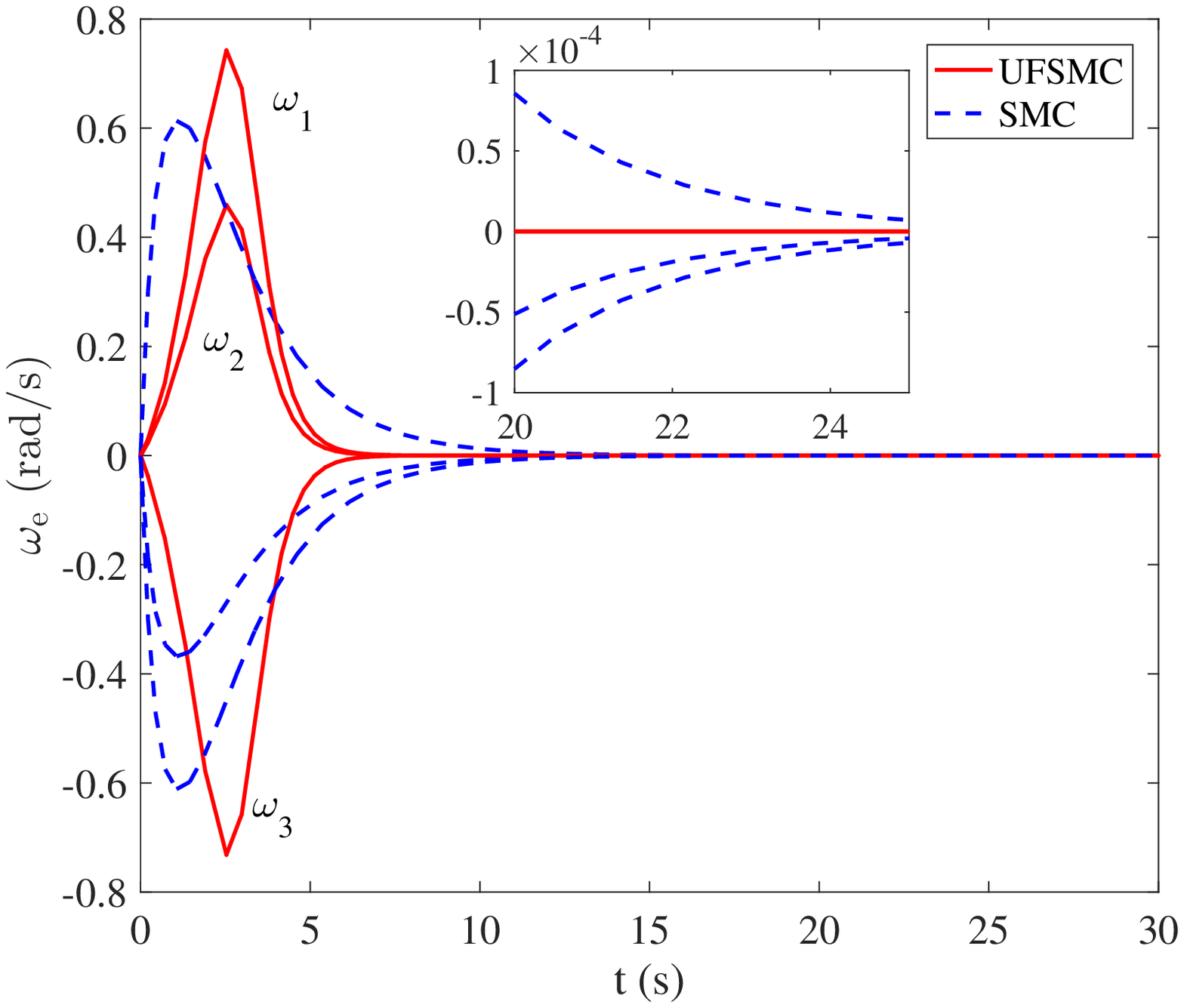}} 
\subfigure[Evolution of the Euler angles $\phi,\theta, \psi $ for the Scenario B]{
		\label{fig_B3}
		        \includegraphics[width=3in]{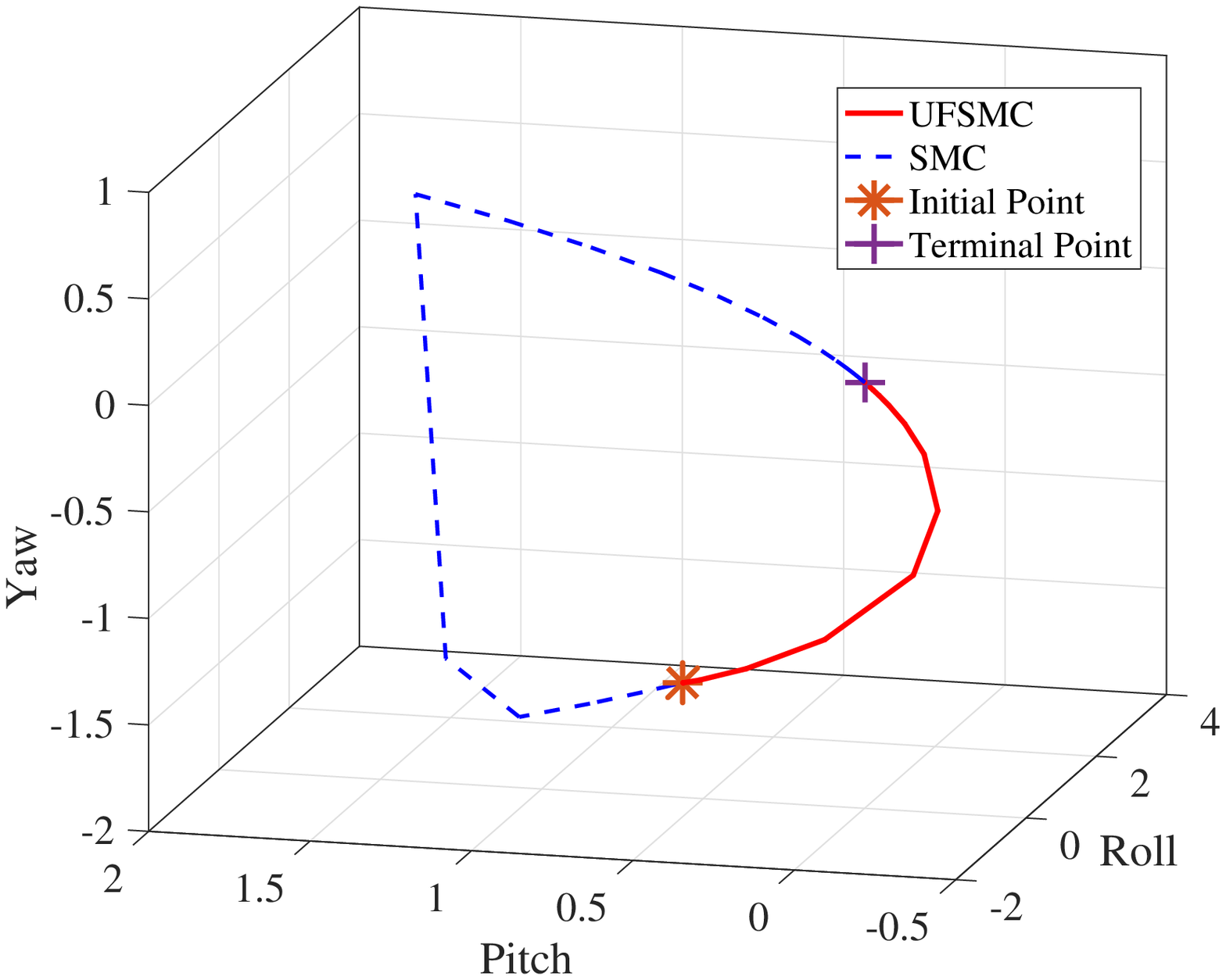}} 
\hspace{1cm}
\subfigure[Time response of the control torques $\boldsymbol{u}$]{\label{fig_D3}
		\includegraphics[width=3in]{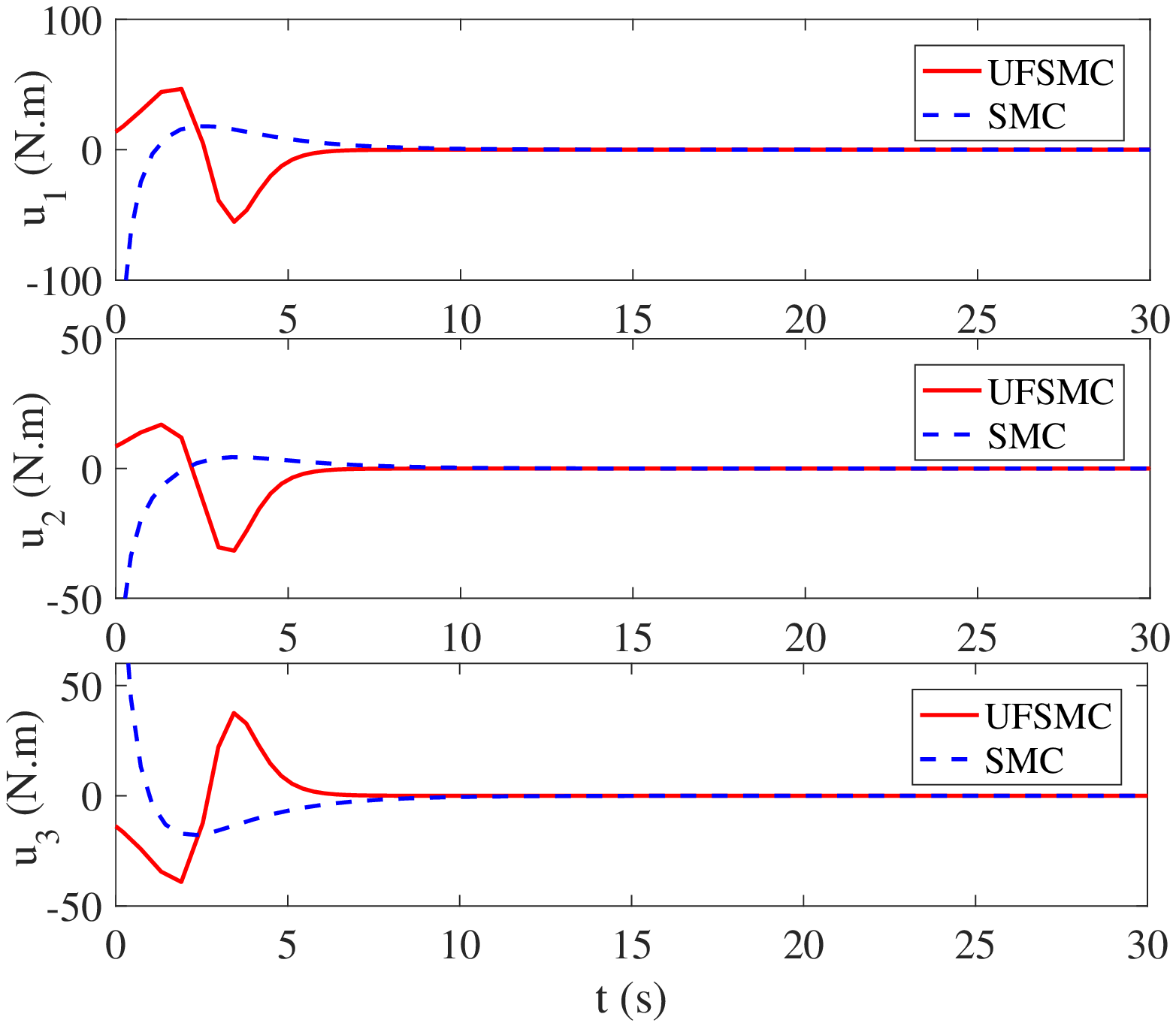}} 
\caption{Comparison results of UFSMC law (\protect\ref{unwindingSMC}) and
SMC~\protect\cite{crassidis1996sliding} and for Scenario B}
\label{fig_3}
\end{figure*}

In conclusion, the proposed UFSMC controller~(\ref{unwindingSMC}) satisfies
the control objective described in Section \ref{controlgoal}, and it
achieves higher pointing accuracy and better stability in a shorter time
compared with the SMC controller~(\ref{unwindingSMC}).%

\section{Conclusion}

In this paper, an unwinding-free sliding mode control law is presented for
the attitude maneuver control of a rigid spacecraft. By constructing a new
switching function, the unwinding-free property of the closed-loop attitude
maneuver control system of a rigid spacecraft is ensured when the system
states are on the sliding surface. Furthermore, by designing a sliding mode
control law with a dynamic parameter, the unwinding-free performance of the
closed-loop attitude maneuver control system of a rigid spacecraft is
guaranteed before the system states reach the sliding surface.
In addition, the switching function converges to zero in finite-time by the
developed control scheme. Finally, a numerical simulation is conducted to
demonstrate the effectiveness of the developed control law. The simulation
results show that the unwinding phenomenon is avoided by adopting the
designed switching surface and controller.

\bibliographystyle{ieeetr}
\bibliography{myreference}

\begin{thebibliography}{10}

\bibitem{hughes2012spacecraft}
P.~C. Hughes, {\em Spacecraft attitude dynamics}.
\newblock Courier Corporation, 2012.

\bibitem{Sharma2004Optimal}
R.~Sharma and A.~Tewari, ``Optimal nonlinear tracking of spacecraft attitude
  maneuvers,'' {\em IEEE Transactions on Control Systems Technology}, vol.~12,
  no.~5, pp.~677--682, 2004.

\bibitem{amrr2019event}
S.~M. Amrr, M.~U. Nabi, and A.~Iqbal, ``An event-triggered robust attitude
  control of flexible spacecraft with modified rodrigues parameters under
  limited communication,'' {\em IEEE Access}, vol.~7, pp.~93198--93211, 2019.

\bibitem{jin2018lpv}
R.~Jin, X.~Chen, Y.~Geng, and Z.~Hou, ``Lpv gain-scheduled attitude control for
  satellite with time-varying inertia,'' {\em Aerospace Science and
  Technology}, vol.~80, pp.~424--432, 2018.

\bibitem{bayat2018model}
F.~Bayat, ``Model predictive sliding control for finite-time three-axis
  spacecraft attitude tracking,'' {\em IEEE Transactions on Industrial
  Electronics}, vol.~66, no.~10, pp.~7986--7996, 2018.

\bibitem{ji2018vibration}
N.~Ji and J.~Liu, ``Vibration control for a flexible satellite with input
  constraint based on nussbaum function via backstepping method,'' {\em
  Aerospace Science and Technology}, vol.~77, pp.~563--572, 2018.

\bibitem{chen1993sliding}
Y.~P. Chen and S.~C. Lo, ``Sliding-mode controller design for spacecraft
  attitude tracking maneuvers,'' {\em IEEE Transactions on Aerospace and
  Electronic Systems}, vol.~29, no.~4, pp.~1328--1333, 1993.

\bibitem{tiwari2016attitude}
P.~M. Tiwari, S.~Janardhanan, and M.~Nabi, ``Attitude control using higher
  order sliding mode,'' {\em Aerospace Science and Technology}, vol.~54,
  pp.~108--113, 2016.

\bibitem{wu2018adaptive}
A.~Wu, R.~Dong, Y.~Zhang, and L.~He, ``Adaptive sliding mode control laws for
  attitude stabilization of flexible spacecraft with inertia uncertainty,''
  {\em IEEE Access}, vol.~7, pp.~7159--7175, 2018.

\bibitem{zou2011finite}
A.-M. Zou, K.~D. Kumar, Z.-G. Hou, and X.~Liu, ``Finite-time attitude tracking
  control for spacecraft using terminal sliding mode and chebyshev neural
  network,'' {\em IEEE Transactions on Systems, Man, and Cybernetics, Part B
  (Cybernetics)}, vol.~41, no.~4, pp.~950--963, 2011.

\bibitem{xu2015study}
S.~S.-D. Xu, C.-C. Chen, and Z.-L. Wu, ``Study of nonsingular fast terminal
  sliding-mode fault-tolerant control,'' {\em IEEE Transactions on Industrial
  Electronics}, vol.~62, no.~6, pp.~3906--3913, 2015.

\bibitem{chaturvedi2011rigid}
N.~A. Chaturvedi, A.~K. Sanyal, and N.~H. McClamroch, ``Rigid-body attitude
  control,'' {\em IEEE control systems magazine}, vol.~31, no.~3, pp.~30--51,
  2011.

\bibitem{du2011finite}
H.~Du, S.~Li, and C.~Qian, ``Finite-time attitude tracking control of
  spacecraft with application to attitude synchronization,'' {\em IEEE
  Transactions on Automatic Control}, vol.~56, no.~11, pp.~2711--2717, 2011.

\bibitem{tsiotras1998further}
P.~Tsiotras, ``Further passivity results for the attitude control problem,''
  {\em IEEE Transactions on Automatic Control}, vol.~43, no.~11,
  pp.~1597--1600, 1998.

\bibitem{cong2013backstepping}
B.~Cong, X.~Liu, and Z.~Chen, ``Backstepping based adaptive sliding mode
  control for spacecraft attitude maneuvers,'' {\em Aerospace Science and
  Technology}, vol.~30, no.~1, pp.~1--7, 2013.

\bibitem{crassidis1996sliding}
J.~L. Crassidis and F.~L. Markley, ``Sliding mode control using modified
  rodrigues parameters,'' {\em Journal of Guidance, Control, and Dynamics},
  vol.~19, no.~6, pp.~1381--1383, 1996.

\bibitem{shuster1993survey}
M.~D. Shuster {\em et~al.}, ``A survey of attitude representations,'' {\em
  Navigation}, vol.~8, no.~9, pp.~439--517, 1993.

\end{thebibliography}

\end{document}